 \definecolor{BLACK}{gray}{0}
 \definecolor{WHITE}{gray}{1}
 \definecolor{RED}{rgb}{1,0,0}
 \definecolor{GREEN}{rgb}{0,1,0}
 \definecolor{BLUE}{rgb}{0,0,1}
 \definecolor{CYAN}{cmyk}{1,0,0,0}
 \definecolor{MAGENTA}{cmyk}{0,1,0,0}
 \definecolor{YELLOW}{cmyk}{0,0,1,0}
\newcommand{\SAVE}[1]{}
\newcommand{\subfigimg}[3][,]{%
  \setbox1=\hbox{\includegraphics[#1]{#3}}
  \leavevmode\rlap{\usebox1}
  \rlap{\hspace*{42pt}\vspace*{12pt}\raisebox{\dimexpr\ht1-1.37\baselineskip}{#2}}
  \phantom{\usebox1}
}
\newcommand{\subfigimgone}[3][,]{%
  \setbox1=\hbox{\includegraphics[#1]{#3}}
  \leavevmode\rlap{\usebox1}
  \rlap{\hspace*{42pt}\vspace*{12pt}\raisebox{\dimexpr\ht1-1.37\baselineskip}{#2}}
  \phantom{\usebox1}
}
\newcommand{\subfigimgtwo}[3][,]{%
  \setbox1=\hbox{\includegraphics[#1]{#3}}
  \leavevmode\rlap{\usebox1}
  \rlap{\hspace*{42pt}\vspace*{12pt}\raisebox{\dimexpr\ht1-1.37\baselineskip}{#2}}
  \phantom{\usebox1}
}
\begin{document}

\title[Density matrix downfolding]{From real materials to model Hamiltonians with density matrix downfolding}
\author{Huihuo Zheng\,$^{1}$, Hitesh J. Changlani\,$^{2}$, \\Kiel T. Williams\,$^{3}$, Brian Busemeyer\,$^{3}$, and Lucas K. Wagner\,$^{3}$} 
\email{lkwagner@illinois.edu}
\affiliation{$^{1}$Argonne Leadership Computing Facility, Argonne National Laboratory, Lemont, IL, USA \\
$^{2}$Department of Physics and Astronomy and Institute for Quantum Matter, Johns Hopkins University, Baltimore, MD, USA\\
$^{3}$Department of Physics and Institute for Condensed Matter Theory, University of Illinois at Urbana-Champaign, Urbana, IL, USA} 
\date{\today}
\begin{abstract}
Due to advances in computer hardware and new algorithms, it is now possible to perform highly accurate many-body simulations of realistic materials with all their intrinsic complications.
The success of these simulations leaves us with a conundrum: how do we extract useful physical models and insight from these simulations? 
In this article, we present a formal theory of downfolding--extracting an effective Hamiltonian from first-principles calculations. 
The theory maps the downfolding problem into fitting information derived from wave functions sampled from a low-energy subspace of the full Hilbert space. 
Since this fitting process most commonly uses reduced density matrices, we term it density matrix downfolding (DMD).
\end{abstract}
\maketitle
\section{Introduction to downfolding the many electron problem}

In multiscale modeling of many-particle systems, the effective Hamiltonian (or Lagrangian) is one of the most core concepts. 
The effective Hamiltonian dictates the behavior of the system on a coarse-grained level, where `sub-grid' effects are folded into the parameters and form of the effective Hamiltonian. 
Many concepts in condensed matter physics can be viewed as statements about the behavior of the effective Hamiltonian. 
In particular, identification of `strongly correlated' materials as materials where band theory is not an accurate representation of the systems is a statement about effective Hamiltonians.
Effective Hamiltonians at different length scales also form the basis of the renormalization group~\cite{Wilson}.
A major goal in condensed matter physics is to determine what effective Hamiltonians apply to different physical situations, in particular quantum effective Hamiltonians, which lead to large-scale emergent quantum phenomena. 

The dominant effective model for quantum particles in condensed systems is band structure, and for metals, Fermi liquid theory. 
However, a major challenge is how this paradigm should be altered when it is no longer a good description of the physical system.
Examples of these include the high-T$_c$ cuprates and other transition metal oxides, which do not appear to be well-described by these simple effective Hamiltonians. 
For these systems, many models have been proposed, such as the Hubbard~\cite{Hubbard1963}, Kanamori~\cite{Kanamori1963}, $t$-$J$~\cite{tJSpalek} and Heisenberg models.
While these models have been extensively studied analytically and numerically, and have significantly enhanced our understanding of the physics of correlated electrons, their effectiveness for describing a real complex system of interest is often unclear. 
At the same time, more complex effective models can be commensurately more difficult to solve, so one would like to also find an accurate effective model that is computationally tractable.

To address the need for a link between {\it ab initio} electron-level models and larger scale models, downfolding has most commonly been carried out using approaches based on density functional theory (DFT). 
The one particle part is obtained from a standard DFT calculation which is projected onto localized Wannier functions and gives an estimate of the effective hoppings of the lattice model based on Kohn-Sham band structure calculations~\cite{Pavirini}. 
Then, to estimate the interactions, one assumes a model of screening of the Coulomb interactions based on constrained DFT, RPA, or some other methods. 
Since effects of interactions between the orbitals of interest have already been accounted for by DFT, a double counting correction is required to obtain the final downfolded Hamiltonian. 
The approach has been developed and widely applied~\cite{Pavirini, Dasgupta, Aryasetiawan2004, Jeschke2013}; but remains an active area of research~\cite{Haule_doublecounting}.
There are other downfolding approaches that include the traditional L\"owdin method, coupled to a stochastic approach~\cite{Tenno,Zhou_Ceperley} and the related method of canonical transformations~\cite{White_CT, Yanai_CT}. 
While they have many advantages, it is typically not possible to know if a given model {\it ansatz} was a good guess or not, and it is very rare for a technique to provide an estimate of the quality of the resultant model. 

The situation described above stands in contrast to the derivation of effective classical models. 
For concreteness, let us discuss classical force fields computed from {\it ab initio} electronic structure calculations. 
Typically, a data set is generated using an {\it ab initio} calculation in which the positions of the atoms and molecules are varied, creating a set of positions and energies. 
The parameters in the force field {\it ansatz} are varied to obtain a best-fit classical model.
Then, using standard statistical tools, it is possible to assess how well the fit reproduces the {\it ab initio} data within the calculation, without appealing to experiment. 
While translating that error to error in properties is not a trivial task, this approach has the important advantage that in the limit of a high quality fit and high quality {\it ab initio} results, the resultant model is predictive.

Na\"ively, one might think to reconcile the fitting approach used in classical force fields with quantum models by matching eigenstates between a quantum model and {\it ab initio} systems, varying the model parameters until the eigenstates match~\cite{Wagner2013}. 
However, this strategy does not work well in practice because it is often not possible to obtain exact eigenstates for either the model or the {\it ab initio} system.
To resolve this, we develop a general theory for generating effective quantum models that is exact when the wave functions are sampled from the manifold of low-energy states. 
Because this method is based on fitting the energy functional, we will show the practical application of this theory using both exact solutions and {\it ab initio} quantum Monte Carlo (QMC) to derive several different quantum models.

The endeavor we pursue here is to develop a multi-scale approach in which the effective interactions between quasiparticles (such as dressed electrons) are determined after an \textit{ab initio} simulation (but not necessarily exact solution) of the continuum Schroedinger equation involving all the electrons. 
The method uses reduced density matrices (RDMs), of low-energy states, not necessarily eigenstates, 
to cast downfolding as a fitting problem.  
We thus call it density matrix downfolding (DMD). 
In this paper, our application of DMD to physical problems employ one body (1-RDM) and two body (2-RDM) density matrices. 
The many-body states used in DMD will typically be generated using QMC techniques [either variational Monte Carlo (VMC) or diffusion Monte Carlo (DMC)] to come close to the low energy manifold.

The remainder of the paper is organized as follows:
\begin{itemize} 
\item In Section \ref{sec:theory}, we clarify and make precise what it means to downfold 
a many-electron problem to a few-electron problem. We recast the problem into minimization 
of a cost function that needs to be optimized to connect the many and few body problems. We further 
these notions both in terms of physical as well as information science descriptions, which allows us to connect to compression algorithms in the machine learning literature. 
\item Section \ref{sec:examples} discusses several representative examples where we consider multiband lattice models 
and {\it ab initio} systems to downfold to a simpler lattice model. 
\item In Section \ref{sec:conclusion}, we discuss future prospects of applications of the DMD method, ongoing challenges 
and clear avenues for methodological improvements. 
\end{itemize}

\newtheorem{theorem}{Theorem}
\newtheorem{definition}{Definition}
\newtheorem{lemma}{Lemma}

\section{Downfolding as a compression of the energy functional}
\label{sec:theory}
\subsection{Theory} 

\subsubsection{Energy functional}
Suppose we start with a quantum system with Hamiltonian $H$ and Hilbert space ${\mathcal H}$.

\begin{definition}
Let the energy functional be $E[\Psi] = \frac{\bra{\Psi}H\ket{\Psi}}{\braket{\Psi|\Psi}}$ for a wavefunction $\ket{\Psi} \in {\mathcal H}$.
\end{definition}

\begin{theorem}
\label{theorem:criticalpoint}
$E[\Psi]$ has a critical point only where $\Psi$ is an eigenstate of $H$.
\end{theorem}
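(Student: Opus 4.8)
The plan is to characterize critical points of $E[\Psi]$ by a first-variation computation, exploiting the fact that $E$ is a Rayleigh quotient. First I would perturb the state as $\ket{\Psi} \to \ket{\Psi} + \epsilon\ket{\phi}$ for an arbitrary $\ket{\phi} \in \mathcal{H}$ (in the domain of $H$) and a complex parameter $\epsilon$, and expand
\begin{equation}
E[\Psi + \epsilon\phi] = \frac{\bra{\Psi}H\ket{\Psi} + \epsilon\bra{\Psi}H\ket{\phi} + \bar\epsilon\bra{\phi}H\ket{\Psi} + |\epsilon|^2\bra{\phi}H\ket{\phi}}{\braket{\Psi|\Psi} + \epsilon\braket{\Psi|\phi} + \bar\epsilon\braket{\phi|\Psi} + |\epsilon|^2\braket{\phi|\phi}}
\end{equation}
to first order in $\epsilon,\bar\epsilon$. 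Collecting the linear terms and writing $E_0 = E[\Psi]$, the variation is $\delta E = \frac{1}{\braket{\Psi|\Psi}}\left(\epsilon\,\bra{\Psi}(H - E_0)\ket{\phi} + \bar\epsilon\,\bra{\phi}(H - E_0)\ket{\Psi}\right) + O(|\epsilon|^2)$. A point $\Psi$ is critical precisely when this linear part vanishes for every admissible $\ket{\phi}$.

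The key step is then to argue that the vanishing of the linear part forces $(H - E_0)\ket{\Psi} = 0$. Since $\epsilon$ is complex, I would treat $\epsilon$ and $\bar\epsilon$ as independent directions (equivalently, test with $\epsilon$ real and then $\epsilon$ imaginary): requiring the coefficient of $\bar\epsilon$ to vanish gives $\bra{\phi}(H - E_0)\ket{\Psi} = 0$ for all $\ket{\phi}$, and the coefficient of $\epsilon$ gives the Hermitian-conjugate condition, which is automatically implied since $H$ is self-adjoint and $E_0$ is real. Because $\ket{\phi}$ ranges over a dense subspace, $\bra{\phi}(H - E_0)\ket{\Psi} = 0$ for all $\ket{\phi}$ implies the vector $(H - E_0)\ket{\Psi}$ is zero, i.e. $H\ket{\Psi} = E_0\ket{\Psi}$, so $\Psi$ is an eigenstate of $H$ with eigenvalue $E[\Psi]$. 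For completeness I would note the converse direction — if $H\ket{\Psi} = \lambda\ket{\Psi}$ then $E_0 = \lambda$ and $\delta E$ vanishes identically — so critical points are exactly the eigenstates, though the theorem as stated only needs the forward implication.

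I expect the main subtlety to be bookkeeping around the complex structure: one must be careful that a single real-parameter variation only yields $\mathrm{Re}\,\bra{\phi}(H-E_0)\ket{\Psi} = 0$, and it is the freedom to also vary along $i\ket{\phi}$ (or equivalently to treat $\Psi$ and $\Psi^{*}$ as independent) that upgrades this to the full complex condition and hence to the operator equation. A secondary technical point worth a remark is the domain issue — $E[\Psi]$ is only defined for $\ket{\Psi}\neq 0$ and $H\ket{\Psi}$ must make sense — but for the finite-dimensional or form-bounded settings relevant to the paper this is not a genuine obstacle, and the normalization $\braket{\Psi|\Psi}$ in the denominator simply rescales $\delta E$ without affecting where it vanishes.
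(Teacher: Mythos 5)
Your proposal is correct and is essentially the paper's own argument: the coefficient of $\bar\epsilon$ in your first-variation expansion is exactly the functional derivative $\frac{\delta}{\delta\Psi^*}E[\Psi]=\frac{(H-E[\Psi])\ket{\Psi}}{\braket{\Psi|\Psi}}$ that the paper computes directly, and both conclude $H\ket{\Psi}=E[\Psi]\ket{\Psi}$. Your version merely makes explicit the Wirtinger-style bookkeeping and the density argument that the paper leaves implicit.
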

\begin{proof}
\begin{eqnarray}
\frac{\delta }{\delta \Psi^*}  E[\Psi] = \frac{\delta}{\delta \Psi^*}\frac{\langle \Psi |H|\Psi\rangle}{\langle \Psi |\Psi\rangle} = \frac{H|\Psi\rangle}{\langle \Psi |\Psi\rangle} - \langle \Psi |H|\Psi\rangle \frac{|\Psi \rangle}{|\langle \Psi | \Psi\rangle|^2} =\frac{ (H-E[\Psi])|\Psi\rangle }{\langle\Psi|\Psi\rangle}\,.
\end{eqnarray}
Therefore, 
$\frac{\delta }{\delta \Psi^*}  E[\Psi] = 0$ if and only if $(H-E[\Psi])|\Psi\rangle =0$, i.e., $\Psi$ is an eigenvector of $H$ corresponding to eigenvalue $E[\Psi]$.  
\end{proof}

\subsubsection{Low energy space} 

\begin{definition}
Let $\mathcal{LE}(H,N)$ be a subset of ${\mathcal H}$ spanned by $N$ vectors given by the lowest energy solutions to $H\ket{\Psi_i}=E_i{\Psi_i}$. 
\end{definition}

\begin{definition}
$H_{eff}$ is an operator on the Hilbert space ${\mathcal {LE}(H,N)}$.	 
\end{definition}

\begin{definition}
The effective model $E_{eff}[\Psi]=\frac{\bra{\Psi}H_{eff}\ket{\Psi}}{\braket{\Psi|\Psi}}$ is a functional from $\mathcal{LE} \rightarrow \mathbb{R}$. 
\end{definition}

If $\ket{\Psi}\in \mathcal{LE}$ and $\ket{\Phi}\in {\mathcal H} \setminus \mathcal{LE}$, then $\ket{\Psi} \oplus \ket{\Phi} \in {\mathcal H}$.
In the following, we will use the direct sum operator $\oplus 0$ to translate between the larger ${\mathcal H}$ and the smaller $\mathcal{LE}$. 

\begin{lemma}
\label{lemma:zeroderiv}
Suppose that $\ket{\Psi}\in \mathcal{LE}$ and $\ket{\Phi} \in {\mathcal H} \setminus \mathcal{LE}$. 
Then $\left.\frac{\delta E[\Psi \oplus \Phi]}{\delta \Phi}\right|_{\Phi=0}=0$. 
\end{lemma}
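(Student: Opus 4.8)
The plan is to reduce the statement to the gradient formula already established in Theorem~\ref{theorem:criticalpoint} and then exploit the fact that $\mathcal{LE}$ is an $H$-invariant subspace. First I would fix the meaning of the symbols: write $\mathcal{H} = \mathcal{LE} \oplus \mathcal{LE}^{\perp}$, read $\oplus 0$ as the orthogonal embedding $\mathcal{LE} \hookrightarrow \mathcal{H}$, and regard $\ket{\Psi \oplus \Phi}$ with $\ket{\Psi} \in \mathcal{LE}$ and $\ket{\Phi} \in \mathcal{LE}^{\perp}$ as the generic vector of $\mathcal{H}$ split along this decomposition. Because the two summands are orthogonal, the functional derivative with respect to $\Phi$ at fixed $\Psi$ is exactly the orthogonal projection onto $\mathcal{LE}^{\perp}$ of the full functional derivative of $E$ on $\mathcal{H}$, evaluated at the point $\ket{\Psi \oplus \Phi}$.

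Next I would set $\Phi = 0$, so the evaluation point is just $\ket{\Psi}$ (embedded via $\oplus 0$). Theorem~\ref{theorem:criticalpoint} then gives the full gradient there as $\frac{(H - E[\Psi])\ket{\Psi}}{\braket{\Psi|\Psi}}$, and it remains only to show that this vector lies entirely in $\mathcal{LE}$, so that its projection onto $\mathcal{LE}^{\perp}$ vanishes.

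The key step is that $\mathcal{LE}(H,N)$ is spanned by eigenvectors $\ket{\Psi_i}$ of $H$, hence $H$ maps $\mathcal{LE}$ into itself: writing $\ket{\Psi} = \sum_{i=1}^{N} c_i \ket{\Psi_i}$ gives $H\ket{\Psi} = \sum_i c_i E_i \ket{\Psi_i} \in \mathcal{LE}$. Since also $E[\Psi]\ket{\Psi} \in \mathcal{LE}$, the combination $(H - E[\Psi])\ket{\Psi}$ lies in $\mathcal{LE}$, and therefore its $\mathcal{LE}^{\perp}$ component is zero, which is the claim. Equivalently, one can bypass projectors by a one-line computation: for arbitrary $\ket{\eta} \in \mathcal{LE}^{\perp}$, differentiating $E[\Psi \oplus t\eta]$ in $t$ at $t = 0$ yields a multiple of $\mathrm{Re}\,\bra{\eta}(H - E[\Psi])\ket{\Psi}$ (the normalization derivative drops out since $\braket{\eta|\Psi} = 0$), and this vanishes because $(H - E[\Psi])\ket{\Psi} \in \mathcal{LE}$ is orthogonal to $\ket{\eta}$; replacing $\eta$ by $i\eta$ kills the imaginary part as well.

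I expect the only subtlety to be bookkeeping rather than mathematics: making sure that "$\delta/\delta\Phi$" genuinely means variation restricted to $\mathcal{LE}^{\perp}$ (this is what the orthogonality of the $\oplus$ decomposition buys us), and noting that the conclusion is insensitive to whether one uses the holomorphic derivative $\delta/\delta\Phi^{*}$ or real variations. The entire mathematical content is the invariance $H\,\mathcal{LE} \subseteq \mathcal{LE}$, which is precisely why $\mathcal{LE}$ is built from eigenstates rather than from an arbitrary low-energy cutoff.
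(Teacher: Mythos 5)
Your proof is correct and rests on the same key fact as the paper's: since $\mathcal{LE}$ is spanned by eigenstates, $H$ maps $\mathcal{LE}$ into itself (equivalently $\bra{\Psi\oplus 0}H\ket{0\oplus\Phi}=0$), so the variation of $E$ in directions from the complement vanishes at $\Phi=0$. The only cosmetic difference is bookkeeping: the paper keeps $\Phi$ general and notes the restricted derivative is proportional to $\ket{\Phi}$, while you set $\Phi=0$ first and project the Theorem~\ref{theorem:criticalpoint} gradient onto $\mathcal{LE}^{\perp}$ -- the same block-diagonality argument.
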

\begin{proof}
$\langle \Psi\oplus 0 | H | 0\oplus \Phi \rangle=0$ because the two states have non-overlapping expansions in the eigenstates of $H$. 
Using that fact, we can evaluate
\begin{align}
\left.\frac{\delta E[\Psi \oplus \Phi]}{\delta \Phi}\right|_{\Phi=0} &= \left.\frac{\left(H-E[\Psi\oplus\Phi]\right) \ket{\Phi} }{\braket{\Psi|\Psi} + \braket{\Phi|\Phi}} \right|_{\Phi=0} = 0.
\end{align}
\end{proof}
This is equivalent to noting that $H$ is block diagonal in the partitioning of ${\mathcal H}$ into $\mathcal{LE}$ and ${\mathcal H} \setminus \mathcal{LE}$.
Importantly, if $\ket{\Psi} \in \mathcal{LE}$, then $\frac{\delta  E[\Psi\oplus 0] }{\delta (\Psi\oplus 0)^*} = \ket{\Psi'} \oplus 0$, where $\ket{\Psi'} \in \mathcal{LE} $. 

\begin{theorem}
\label{theorem:matching}
Assume $ E[\Psi\oplus 0]  = E_{eff}[\Psi]+C$ for any $\ket{\Psi } \in \mathcal{LE}$, where $C$ is a constant. 
Then $(H_{eff}+C)|\Psi\rangle\oplus 0 = H (|\Psi\rangle \oplus 0)$.
\end{theorem}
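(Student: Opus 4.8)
The plan is to obtain the operator identity by differentiating the hypothesis. Since $E[\Psi\oplus 0]$ and $E_{eff}[\Psi]+C$ agree as functionals on $\mathcal{LE}$, their variational derivatives with respect to $\Psi^*$ must coincide; the additive constant $C$ contributes nothing to a derivative, so this immediately reduces the claim to an equality of two gradients.

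First I would evaluate each gradient with the formula derived in the proof of Theorem~\ref{theorem:criticalpoint}. Applied to $H_{eff}$ acting on the Hilbert space $\mathcal{LE}$ it gives
\begin{equation}
\frac{\delta E_{eff}[\Psi]}{\delta\Psi^*}=\frac{(H_{eff}-E_{eff}[\Psi])\ket{\Psi}}{\braket{\Psi|\Psi}},
\end{equation}
while applied to $H$ on $\mathcal H$ and evaluated at the embedded state $\ket{\Psi}\oplus 0$ it gives
\begin{equation}
\frac{\delta E[\Psi\oplus 0]}{\delta(\Psi\oplus 0)^*}=\frac{(H-E[\Psi\oplus 0])(\ket{\Psi}\oplus 0)}{\braket{\Psi|\Psi}}.
\end{equation}
The one place that needs care is that the left-hand side of the hypothesis lives on $\mathcal{LE}$, so a priori the relevant derivative is only the component of the full $\mathcal H$-gradient lying in $\mathcal{LE}\oplus 0$. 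Here I would invoke Lemma~\ref{lemma:zeroderiv} and the remark following it --- equivalently, the block-diagonality of $H$ in the splitting $\mathcal H=\mathcal{LE}\oplus(\mathcal H\setminus\mathcal{LE})$ --- to conclude that $(H-E[\Psi\oplus 0])(\ket{\Psi}\oplus 0)$ already belongs to $\mathcal{LE}\oplus 0$, so no projection is lost and the expression above is genuinely the restricted gradient.

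Equating the two gradients, multiplying through by $\braket{\Psi|\Psi}$, and substituting the hypothesis $E[\Psi\oplus 0]=E_{eff}[\Psi]+C$ then yields
\begin{equation}
H(\ket{\Psi}\oplus 0)-\bigl(E_{eff}[\Psi]+C\bigr)(\ket{\Psi}\oplus 0)=\bigl(H_{eff}\ket{\Psi}-E_{eff}[\Psi]\ket{\Psi}\bigr)\oplus 0,
\end{equation}
and the $E_{eff}[\Psi]$ terms cancel, leaving $H(\ket{\Psi}\oplus 0)=(H_{eff}+C)\ket{\Psi}\oplus 0$ for every $\ket{\Psi}\in\mathcal{LE}$; by linearity of $H$ and $H_{eff}$ this is an identity of operators on $\mathcal{LE}\oplus 0$. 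I expect the only real obstacle to be the bookkeeping in the middle step --- arguing cleanly that the derivative of the restricted functional is the projection of the full $\mathcal H$-gradient, and that this projection acts trivially by Lemma~\ref{lemma:zeroderiv} --- since the remainder is just the one-line computation of Theorem~\ref{theorem:criticalpoint} followed by elementary cancellation.
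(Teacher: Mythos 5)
Your proposal is correct and follows essentially the same route as the paper: differentiate both energy functionals using the identity from Theorem~\ref{theorem:criticalpoint}, note (via Lemma~\ref{lemma:zeroderiv} and the block-diagonality remark) that the full-space gradient at $\ket{\Psi}\oplus 0$ already lies in $\mathcal{LE}\oplus 0$, equate the two gradients since the functionals differ by a constant, and cancel the $E_{eff}[\Psi]$ terms to obtain $H(\ket{\Psi}\oplus 0)=(H_{eff}+C)\ket{\Psi}\oplus 0$. Your explicit justification of the projection step is a slightly more careful spelling-out of what the paper states in the remark following Lemma~\ref{lemma:zeroderiv}, but the argument is the same.
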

\begin{proof}
Note that
\begin{align}
	\frac{\delta E[\Psi\oplus 0]}{\delta (\Psi\oplus 0)^*}=\frac{(H-E[\Psi\oplus 0])\ket{\Psi\oplus 0} }{\braket{\Psi|\Psi}}
	\label{eqn:psider}
\end{align}
and 
\begin{align}
	\frac{\delta E_{eff}[\Psi]}{\delta \Psi^*}=\frac{(H_{eff}-E_{eff}[\Psi])\ket{\Psi} }{\braket{\Psi|\Psi}}.
	\label{eqn:psieffder}
\end{align}
Since the derivatives are equal, setting Eq.~\eqref{eqn:psider} equal to Eq.~\eqref{eqn:psieffder},
\begin{align}
	 H\ket{\Psi\oplus 0}= (H_{eff}+E[\Psi\oplus 0]-E_{eff}[\Psi])\ket{\Psi}\oplus 0 =(H_{eff}+C)\ket{\Psi}\oplus 0.
\end{align}
\end{proof}

Theorem~\ref{theorem:matching} combined with Lemma~\ref{lemma:zeroderiv} means that the eigenstates of $H_{eff}$ are be the same as the eigenstates of $H$ if its derivatives match $H$. 
Such $H_{eff}$ always exists. 
Let $H_{eff} = \sum_{i}^N E_i |\Psi_i\rangle \langle \Psi_i|$ where $|\Psi_i\rangle$'s are eigenstates belong to $\mathcal{LE}(H,N)$. This satisfies $E[\Psi] = E_{eff}[\Psi]$ and $H_{eff}|\Psi\rangle = H |\Psi \rangle$ for any $\Psi$ in $\mathcal{LE}(H,N)$.  

We have thus reduced the problem of finding an effective Hamiltonian $H_{eff}$ that reproduces the low-energy spectrum of $H$ to matching the corresponding energy functionals $E[\Psi]$ and $E_{eff}[\Psi]$. 
This involves sampling the low-energy space, choosing the form of $H_{eff}$, and optimizing the parameters.
An important implication of this is that it is not necessary to diagonalize either of the Hamiltonians; one must only be able to select wave functions from the low-energy space $\mathcal{LE}$.
As we shall see, this can be substantially easier than attaining eigenstates.

Some further notes about this derivation:
\begin{itemize}
\item Fitting $\Psi$'s must come from $\mathcal{LE}$. It is not enough that the energy functional $E[\Psi]$ is less than some cutoff.
\item In the case of sampling an approximate $\mathcal{LE}$, the error comes from non-parallelity of $E[\Psi]$ with the correct low energy manifold, up to a constant offset.
\item While $H_{eff}$ is unique, it has many potential representations and approximations. 
\item Our method can be applied to any manifold spanned by eigenstates
\item Model fitting is finding a compact approximation to $E_{eff}[\Psi]$. This is a high-dimensional space, so we use descriptors to do this.	
\item For operators that are not the Hamiltonian, it is possible to fit $\mathcal{O}_{eff}[\Psi] \simeq {\mathcal O}[\Psi]$ in a similar way. However, the eigenstates of ${\mathcal O}$ and ${\mathcal O}_{eff}$ will not coincide in general unless $\mathcal{O}$ commutes with the Hamiltonian.
\end{itemize}

The theory presented above maps coarse-graining into a functional approximation problem. 
This is still rather intimidating, since even supposing one can generate wave functions in the low-energy space, they are still complicated objects in a very large space.
An effective way to accomplish this is through the use of descriptors, $d_i[\Psi]$, which map from ${\mathcal H} \rightarrow \mathbb{R}$.
Then we can approximate the energy functional as follows
\begin{equation}
E_{eff}[\Psi] \simeq \sum_i f_i(d_i[\Psi]),
\end{equation}
where $f_i$ are some parameterized functions.
This will allow us to use techniques from statistical learning to efficiently describe $E_{eff}$. 
\subsection{Practical protocol}

\tikzstyle{decision} = [diamond, draw, fill=blue!10, 
    text width=4.5em, text badly centered, node distance=3cm, inner sep=0pt]
\tikzstyle{block} = [rectangle, draw, fill=blue!10, 
    text width=5em, text centered, rounded corners, minimum height=4em]
\tikzstyle{result} = [rectangle, draw, fill=red!10, 
    text width=5em, text centered, rounded corners, minimum height=4em]
\tikzstyle{line} = [draw,-latex',very thick]
\tikzstyle{cloud} = [draw, ellipse,fill=red!20, node distance=3cm,
    minimum height=2em]
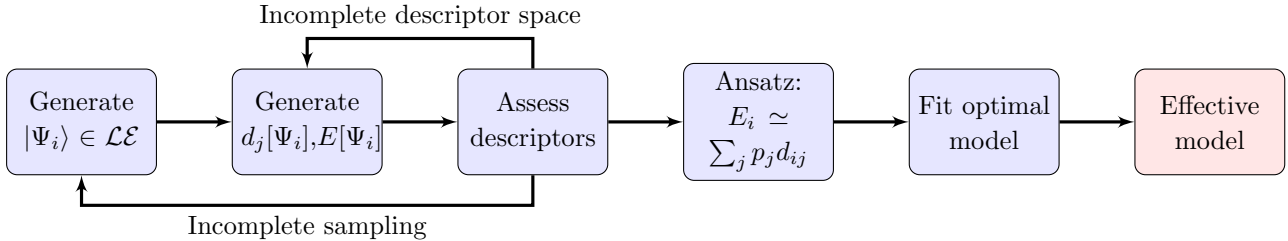
\begin{figure*}[hbt]
\begin{tikzpicture}[scale=2,node distance = 3cm, auto]
    \node [block] (wfs) {Generate $\ket{\Psi_i} \in \mathcal{LE}$};
    \node [block, right of=wfs] (descriptors) {Generate $d_j[\Psi_i]$,$E[\Psi_i]$};
    \node [block, right of=descriptors] (assess) {Assess descriptors};
    \node [block, right of=assess] (ansatz) {Ansatz: $E_i \simeq \sum_j p_j d_{ij}$};
    \node [block, right of=ansatz] (fit) {Fit optimal model};
    \node [result, right of=fit] (model) {Effective model};
    \path [line] (wfs) -- (descriptors);
    \path [line] (descriptors) -- (assess);
    \path [line] (assess) --  (ansatz);
    \path [line] (ansatz) --  (fit);
    \path [line] (fit) --  (model);

    \path [line] (assess.south) -- ($ (assess.south) + (0,-0.2)$) 
                 -- node [below] {Incomplete sampling} 
                 ($ (wfs.south) + (0,-0.2)$) --  (wfs.south);
    \path [line] (assess.north) -- ($ (assess.north) + (0,0.2)$) 
                 -- node [above] {Incomplete descriptor space} 
                 ($ (descriptors.north) + (0,0.2)$) --  (descriptors.north);

\end{tikzpicture}
\caption{A practical protocol for fitting effective models to {\it ab initio} data.}
\label{fig:protocol} 
\end{figure*}

A practical protocol is presented in Figure~\ref{fig:protocol}. 
In this section we go through this procedure step by step.

\paragraph{Generating $\ket{\Psi_i}\in \mathcal{LE}$}
Ideally one would be able to sample the entire low-energy space. 
Typically, however, the space will be too large and it will need to be sampled. 
The optimal wave functions to use depend on the models one expects to fit, which we will discuss in detail  in later steps. 
Simple strategies that we will use in the examples below include excitations with respect to a determinant and varying spin states.

\paragraph{Generate $d_j[\Psi_i]$ and $E[\Psi_i]$} 
The choice of descriptor is fundamental to the success of the downfolding. 
In the case of a second-quantized Hamiltonian
\begin{equation}
H_{eff} = E_0 + \sum_{ij} t_{ij} (c_i^\dagger c_j + h.c.) + \sum_{ijkl} V_{ijkl} c_i^\dagger c_j^\dagger c_k c_l,
\end{equation}
a set of linear descriptors by simply taking the expectation value of both sides of the equation. 
Then for example, the occupation descriptor for orbital $k$ is $d_{occ(k)}[\Psi_i] = \braket{\Psi_i | c_k^\dagger c_k | \Psi_i}$; the double occupation descriptor for orbital $k$ is $d_{double(k)}[\Psi_i] = \braket{\Psi_i | n_{k\uparrow}n_{k\downarrow} | \Psi_i}$. 
The orbital that $c_k$ represents is part of the descriptor, and in the examples below we will discuss this choice as well.
One is not limited to static orbital descriptors; they may have a more complex functional dependence on the trial function to include orbital relaxation.

\paragraph{Assess descriptors}
At this point, one has collected the data $E_i$ and $d_{ij}$. 
If two descriptors have a large correlation coefficient, then they are redundant in the data set. 
This could either mean that the sampling of the low-energy Hilbert space $\mathcal{LE}$ was insufficient, or that they are both proxies for the same differences in states. 
If two data points have the same or very similar descriptor sets, but different energies, then either the descriptor set is not enough to describe the variations in the low-energy space, or the sampling has generated states that are not in the low-energy space.
To resolve these possibilities, one should analyze the difference between the two wave functions.  

In either case, when the model is accurate, the fits will be accurate.
If descriptors values available in the reduced Hilbert space are not represented in the sampled wave functions, then intruder states can appear upon solution of the effective model. 
In that case, the model fitting is an extrapolation instead of an interpolation.
For this reason it is desirable to have eigenstates or near-eigenstates in the sample set if possible; they are guaranteed to be on the corners of the descriptor space if the model is accurate.

\paragraph{Ansatz: $E_i \simeq \sum_i  d_{ij} p_j$} 
If the descriptors are chosen well, then the model can be written in linear form:
\begin{equation}
E[\Psi_i] = \sum_j p_j d_j[\Psi_i],	
\end{equation}
which we shorten to 
\begin{equation}
{\bf E} = D{\bf p} .
\label{eqn:EdP}
\end{equation}
If this can be done, the fitting problem is reduced to a linear regression optimization.
More complex functions of the descriptors are also possible, although at the cost of making the effective model more difficult to solve and complicating the fitting procedure.

\paragraph{Fit optimal model}
Finally, one wishes to find a set of parameters such that Eq.~\eqref{eqn:EdP} is satisfied as closely as possible. 
There are many choices to make in this step, which will often depend on the desired properties of the final model. 
One can imagine choosing different cost functions to minimize, which can also include a penalty for complicated models. 
In our tests, we have successfully used LASSO \cite{Lasso} and matching pursuit techniques \cite{MP_Zhang1993} to select high quality and compact model parameters. 
A detailed example of using the latter technique is presented in Section~\ref{subsection:fese}.

\section{Representative Examples}
\label{sec:examples}
Given the theoretical framework for downfolding a many orbital (or many-electron) problem to a 
few orbital (or few-electron) problem, we now discuss examples which elucidate the DMD method. 
The examples are as follows:
\begin{itemize}
\item Section~\ref{subsection:3band}: Three-band Hubbard $\rightarrow$ one-band Hubbard at half filling. Demonstrates finding a basis set for the second quantized operators and uses a set of eigenstates directly sampled from the low-energy space to find a one-band model.
\item Section~\ref{subsection:1dhydrogen}: Hydrogen chain $\rightarrow$ one-band Hubbard model at half filling. Demonstrates basis sets for {\it ab initio} systems and the possibility to use this technique to determine the quality of a model to a given physical situation.
\item Section~\ref{subsection:graphene}: Graphene $\rightarrow$ one-band Hubbard model with and without $\sigma$ electrons. Demonstrates using the downfolding procedure to examine the effects of screening due to core electrons. 
\item Section~\ref{subsection:fese}: FeSe molecule $\rightarrow$ $3d,4p,4s$ system. Demonstrates the use of matching pursuit to assess the importance of terms in an effective model and to select compact effective models.
\end{itemize}

In all examples we will highlight the important ingredients associated with DMD. 
First and foremost is the choice of low energy space or energy window i.e. how our database of wave functions was generated. 
Associated with this is the choice of the one body space in terms of which the effective Hamiltonian is expressed. 
Finally, we discuss aspects of the functional forms or parameterizations that are expected to describe our physical problem. 
An important effective Hamiltonian that enters three out of our four representative examples is the one-band or single-band Hubbard model:
\begin{equation}
	H = E_0 -t \;\sum_{\langle i,j \rangle, \eta} \tilde{d}_{i,\eta}^{\dagger} \tilde{d}_{j,\eta} + U \;\sum_{i} \tilde{n}^{i}_{\uparrow} \tilde{n}^{i}_{\downarrow}\,,
\label{eq:oneband}
\end{equation}
where $t$ and $U$ are downfolded (renormalized) parameters, $\eta$ is a spin index, 
$\tilde{d}_{i,\eta}$ is the effective one-particle operator associated with spatial orbital (or site) $i$ 
and $n_{i,\eta}=\tilde{d}_{i,\eta}^{\dagger} \tilde{d}_{i,\eta}$ is the corresponding number operator.
$\langle i,j \rangle$ is used to denote nearest neighbor pairs.
We will sometimes drop the constant energy shift $E_0$ when we write equations like Eq.~\eqref{eq:oneband}.

\subsection{Three-band Hubbard model to one-band Hubbard model at half filling}
\label{subsection:3band} 
Our first example is motivated by the high $T_c$ superconducting cuprates~\cite{Bednorz1986} that 
have parent Mott insulators with rich phase diagrams on electron or hole doping~\cite{Dagotto_RevModPhys, LeeWen_RevModPhys}. 
Many works have been devoted to their model Hamiltonians and corresponding parameter 
values~\cite{tJSpalek, Pavirini, Emery, ZhangRice, Hybertsen_PRB1989, Hybertsen_PRB1990, Kent_Hubbard}. 
A minimal model involving both the copper and oxygen degrees of freedom 
is the three-orbital or three-band Hubbard model, 
\begin{eqnarray}
H &=&    \epsilon_p \sum_{j\in p,\eta} n_{j,\eta} + \epsilon_{d} \sum_{i \in d,\eta}  n_{i,\eta} 
        + t_{pd} \sum_{\langle i\in d ,j \in p \rangle, \eta} \text{sgn}(p_i,d_j) \Big( c_{i,\eta}^{\dagger} c_{j,\eta} + \text{h.c.} \Big) \nonumber \\
          & &   + U_p \sum_{j\in p} n_{j,\uparrow} n_{j,\downarrow} + U_d \sum_{i\in d} n_{i,\uparrow} n_{i,\downarrow} + V_{pd} \sum_{\langle i \in p ,j \in d \rangle} n_j n_i\,,
\end{eqnarray}
where $d_i,p_j$ refer to the  $d_{x^2 - y^2}$ orbitals of copper at site $i$ and $p_x$ or $p_y$ 
oxygen at site $j$, respectively. 
$\text{sgn}(p_i,d_j)$ is the sign of the hopping $t_{pd}$ 
between nearest neighbors, shown schematically in Figure~\ref{fig:threeband}. 
$\epsilon_d$ and $\epsilon_p$ are orbital energies, $U_d$ and $U_p$ are strengths of onsite Hubbard interactions,  
and $V_{pd}$ is the strength of the density-density interactions between a neighboring $p$ and $d$ orbital. 
To simplify we consider only the case where $\epsilon_p$, $U_d$ and $t_{pd}$ are non zero; $t_{pd}$ is chosen throughout this section to be the typical value of $1.3$ eV to give the reader a sense of overall energy scales. 
Since we work with fixed number of particles we set our reference zero energy to be $\epsilon_d = 0$, thus the charge transfer energy $\Delta \equiv \epsilon_p - \epsilon_d$ equals $\epsilon_p$ in our notation. 
We work in the hole notation; half filling corresponds to two spin-up and two spin-down holes on the $2\times2$ cell.

\begin{figure}[hbt]
\centering
\includegraphics[width=0.9\linewidth]{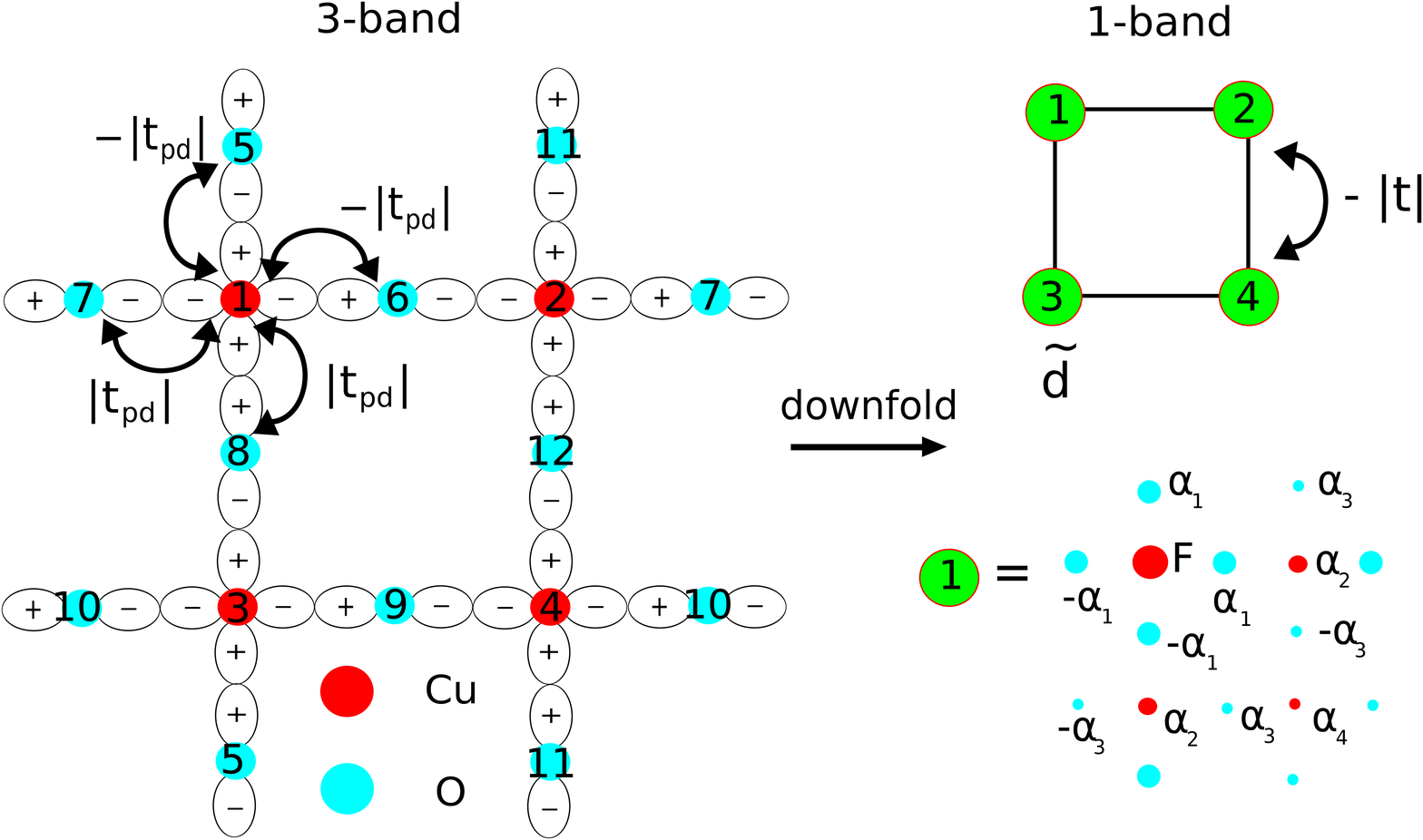}
\caption
{Schematic for downfolding the three-band Hubbard model on the $2\times2$ cell 
to the one-band Hubbard model. The oxygen orbitals are eliminated to give ``dressed" $d$-like orbitals of the one-band model, 
with modified hopping and interaction parameters. The relationship between the $\tilde{d}$ and the copper and oxygen orbitals is 
encoded by a linear transformation which is parameterized by $\alpha_1$, $\alpha_2$, $\alpha_3$, $\alpha_4$ and $F$ (see Appendix for more details). Here 
the diameter of the circles has been shown to be proportionate to the magnitude of $F$ or $\alpha_i$.}
\label{fig:threeband} 
\end{figure}

It is our objective to determine what one-band Hubbard model~[Eq.~\eqref{eq:oneband}]
``best" describes the three-band data. The effective \textit{d-like} orbitals $\tilde{d}_{i,\eta}$, 
that enter the low energy description are mixtures of copper and oxygen orbitals; this optimal transformation also remains an unknown. 
Thus the model determination involves two aspects (1) what are the composite objects that give a 
compact description of the low energy physics? and (2) given this choice what are the effective interactions between them? 
(A similar problem was posed and solved by one of us in the context of spin systems~\cite{Changlani_percolation}.)
In addition, the best effective Hamiltonian description depends on the energy scale of interest. 
All these issues will be addressed in the remainder of the section. 

We begin by encoding the relationship between the bare and effective operators as a linear transformation ${\bf T}$, 
\begin{equation}
	\tilde{d}_{i,\eta} = \sum_{j} T_{ij} c_{j,\eta}
\label{eq:dc}
\end{equation}
where $c_{j,\eta}$ is the hole (destruction) operator and refers to either the bare $d$ or $p$ orbitals. 
Further generalizations of this relationship (for example, including higher body terms) are also possible, but have not been considered here. 
For the $2\times2$ unit cell {\bf T} is a $4 \times 12 $ matrix, which we parameterize by 
four distinct parameters. These correspond to mixing of a copper orbital 
with nearest neighbor oxygens ($\alpha_1$), nearest neighbor coppers ($\alpha_2$), next-nearest neighbor oxygens ($\alpha_3$) 
and next-nearest neighbor coppers ($\alpha_4$) as shown schematically in Figure~\ref{fig:threeband}. 
The explicit form of ${\bf T}$ after accounting for the symmetries of the 
lattice has been written out in the Appendix. These parameters are optimized to minimize a certain cost function, 
which will be explained shortly. 

All RDMs in the three-band and one-band descriptions are also related via ${\bf T}$; 
the ones that we focus on are evaluated in eigenstate $s$ and are given by,
\begin{subequations}
\begin{eqnarray}
	\langle {\tilde{d}_{i,\eta}}^{\dagger} \tilde{d}_{j,\eta} \rangle_{s} &=& \sum_{mn} T^{*}_{im} \langle {c_{m,\eta}}^{\dagger} c_{n,\eta} \rangle_{s} T_{jn} \label{eq:dmstransformations1} \,,\\
	\langle \tilde{n}_{i,\uparrow} \tilde{n}_{i,\downarrow} \rangle_{s} &=& \sum_{jkmn} T^{*}_{ij} T^{*}_{im} \langle {c_{j,\uparrow}}^{\dagger} {c_{m,\downarrow}}^{\dagger} c_{n,\downarrow} c_{k,\uparrow} \rangle_{s} T_{in} T_{ik}\,.
\label{eq:dmstransformations2}
\end{eqnarray}
\end{subequations}
We optimize ${\bf T}$ by demanding two conditions be satisfied, (1) the effective orbitals ($\tilde{d}_{i,\eta}$) 
are orthogonal to each other i.e. $\Big({\bf T} {\bf T}^{\dagger}\Big)_{mn} = \delta_{mn}$
and (2) the sum of all diagonal entries (trace) of the 1-RDM of the effective orbitals for all low energy eigenstates 
equals the number of electrons of a given spin i.e. $\sum_{i} \sum_{\eta} \langle {\tilde{d}_{i,\eta}}^{\dagger} \tilde{d}_{i,\eta} \rangle_{s} = N_{\eta}$. 
These conditions are enforced by minimizing a cost function,
\begin{equation}
C = \sum_{s} \sum_{\eta} \Big( \sum_{i} \langle \tilde{d}_{i,\eta}^{\dagger} \tilde{d}_{i,\eta} \rangle_{s} - N_{\eta} \Big)^{2} + \sum_{mn} ( \Big({\bf T} {\bf T}^{\dagger}\Big)_{mn} -\delta_{mn})^{2}\,.
\label{eq:C}
\end{equation} 
For the $2\times2$ cell, $N_{\uparrow}=N_{\downarrow}=2$ and $i=1,2,3,4$. 
The number of states $s$ was varied from three to six, depending on the energy window of interest.  

Figure~\ref{fig:varyUdep} shows regimes of the three-band model where 
the lowest six eigenstates are separated from the higher energy manifold; the 
fourth and fifth eigenstates are degenerate. 
In the large $U_d$ limit, charge fluctuations are suppressed and these six 
states correspond to the Hilbert space of $4 \choose 2$ states of the effective spin model in its $S_z=0$ sector.
These states have primarily \textit{d-like} character, an aspect we will verify in this section. 
The eigenstates outside of this manifold involve \textit{p-like} excitations which the one-band model is not designed 
to capture. 

We chose the lowest three eigenstates of the three-band model for minimizing 
the cost in Eq.~\eqref{eq:C}. The four dimensional space of parameters of ${\bf T}$ 
was scanned for this purpose. The corresponding trace and orthogonality conditions are simultaneously 
satisfied with only small deviations, confirming the validity of Eq.~\eqref{eq:dc}. 
Importantly, the 1-RDM elements in the transformed basis corresponding to nearest neighbors $\langle \tilde{d}_1^{\dagger} \tilde{d}_2 \rangle_s$ 
already provide estimates for $U/t$ of the effective model. Since the exact knowledge of the corresponding eigenstates of 
the one-band Hubbard model is available for arbitrary $U/t$ by exact diagonalization, we directly look up the $U/t$ with 
the same 1-RDM value. These estimates complement the one obtained by DMD which was carried out 
with the same three low-energy eigenstates, using their energies and 
the computed values of $\langle \tilde{d}_1^{\dagger} \tilde{d}_2 \rangle_s$ 
and $\langle \tilde{n}_{i,\uparrow} \tilde{n}_{i,\downarrow} \rangle_{s}$ from Eqs.~\eqref{eq:dmstransformations1} 
and~\eqref{eq:dmstransformations2}.
~\footnote{We also mimicked the situation characteristic of \textit{ab initio} 
examples where no eigenstates are generally available. Several non eigenstates were generated as random linear combinations of 
the lowest three eigenstates and input into the DMD procedure, with 
similar outcomes.}
A representative example of our results for $U_{d}/t_{pd}=8$ and $\Delta/t_{pd}=3$ 
has been discussed in the Appendix. 
\renewcommand{\thesubfigure}{(\Alph{subfigure})}

\begin{figure}[hbt]
\centering
\subfigure[]{\includegraphics[width=0.47\linewidth]{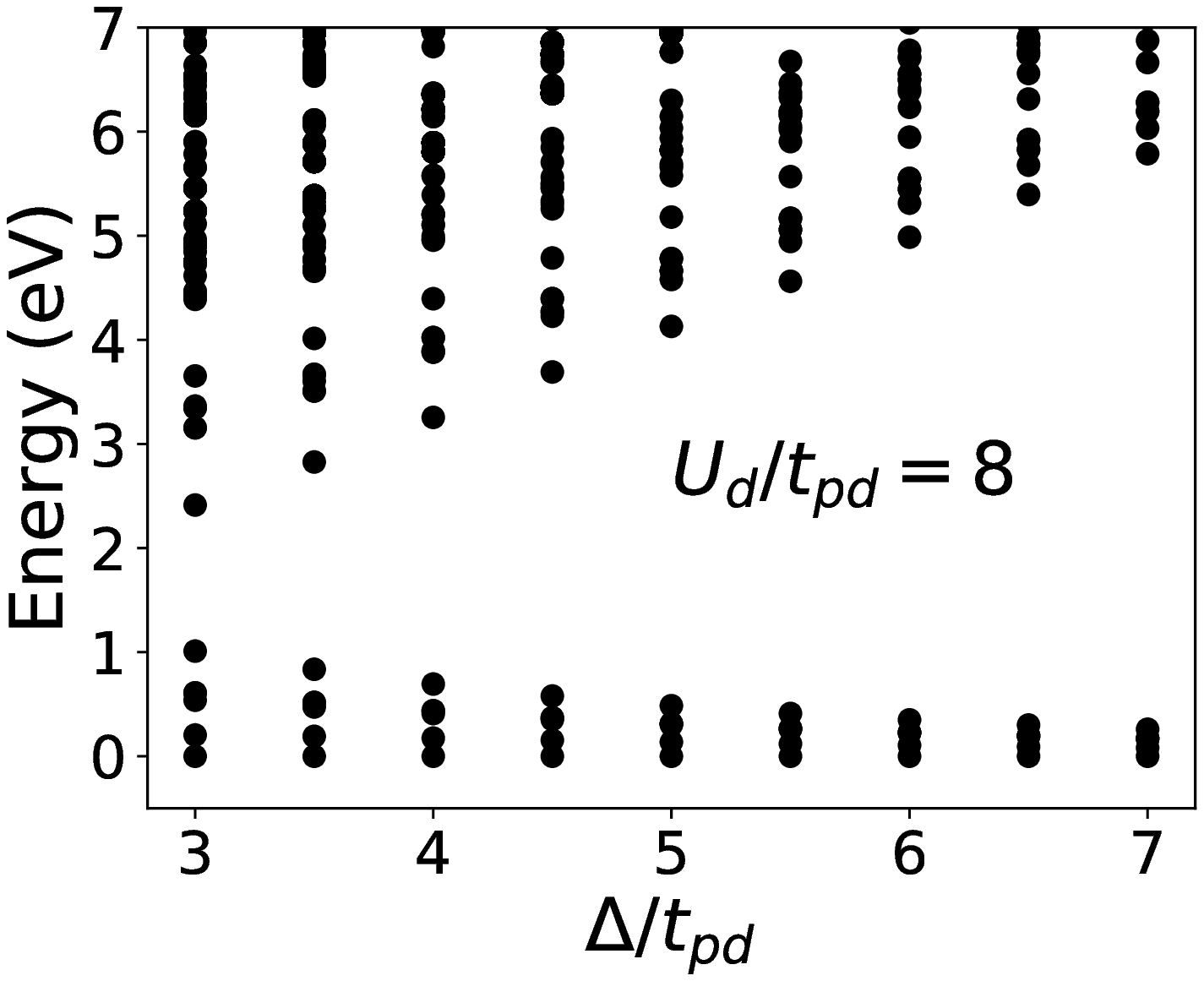}}
\subfigure[]{\includegraphics[width=0.52\linewidth]{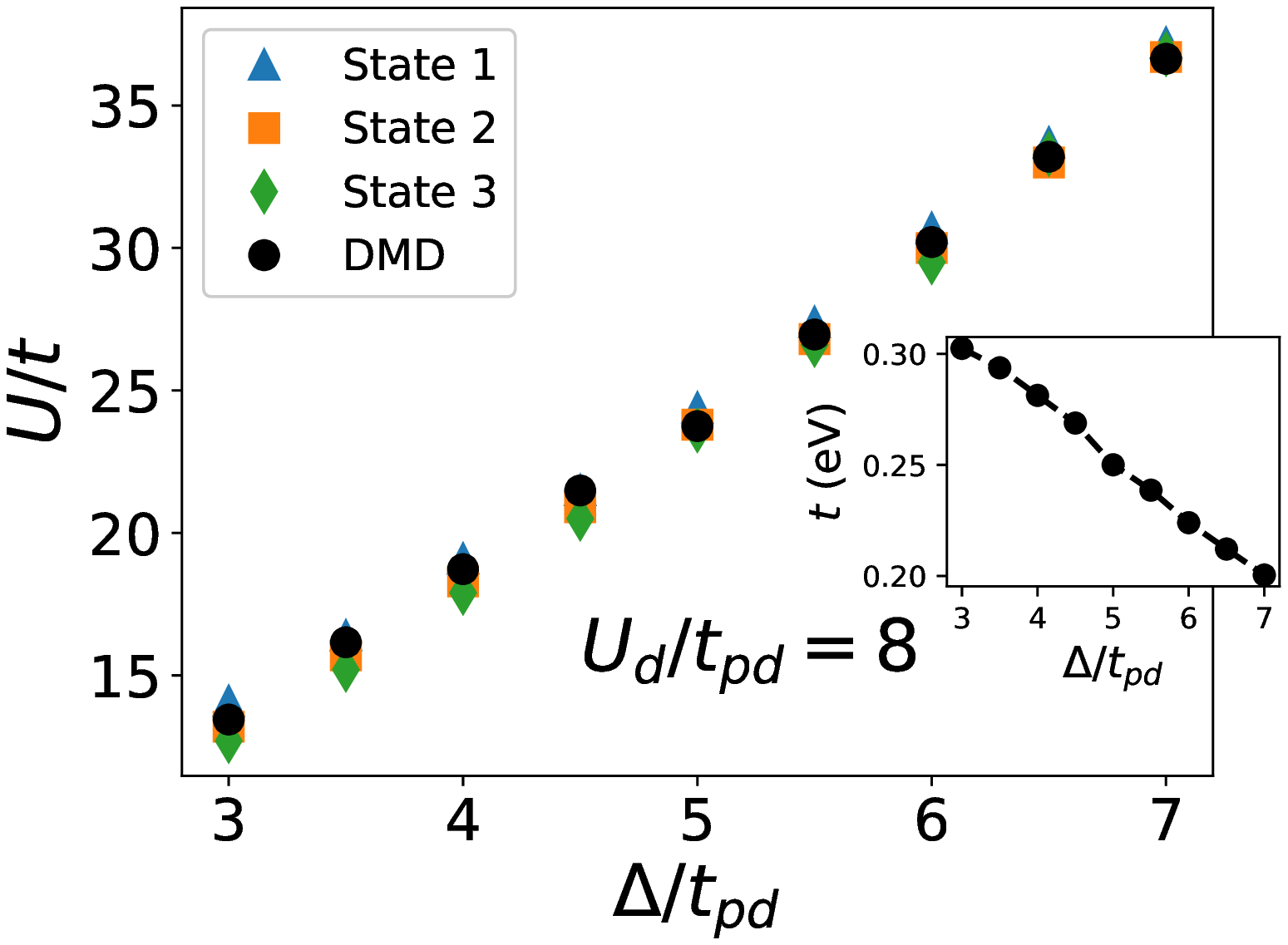}}
\subfigure[]{\includegraphics[width=0.47\linewidth]{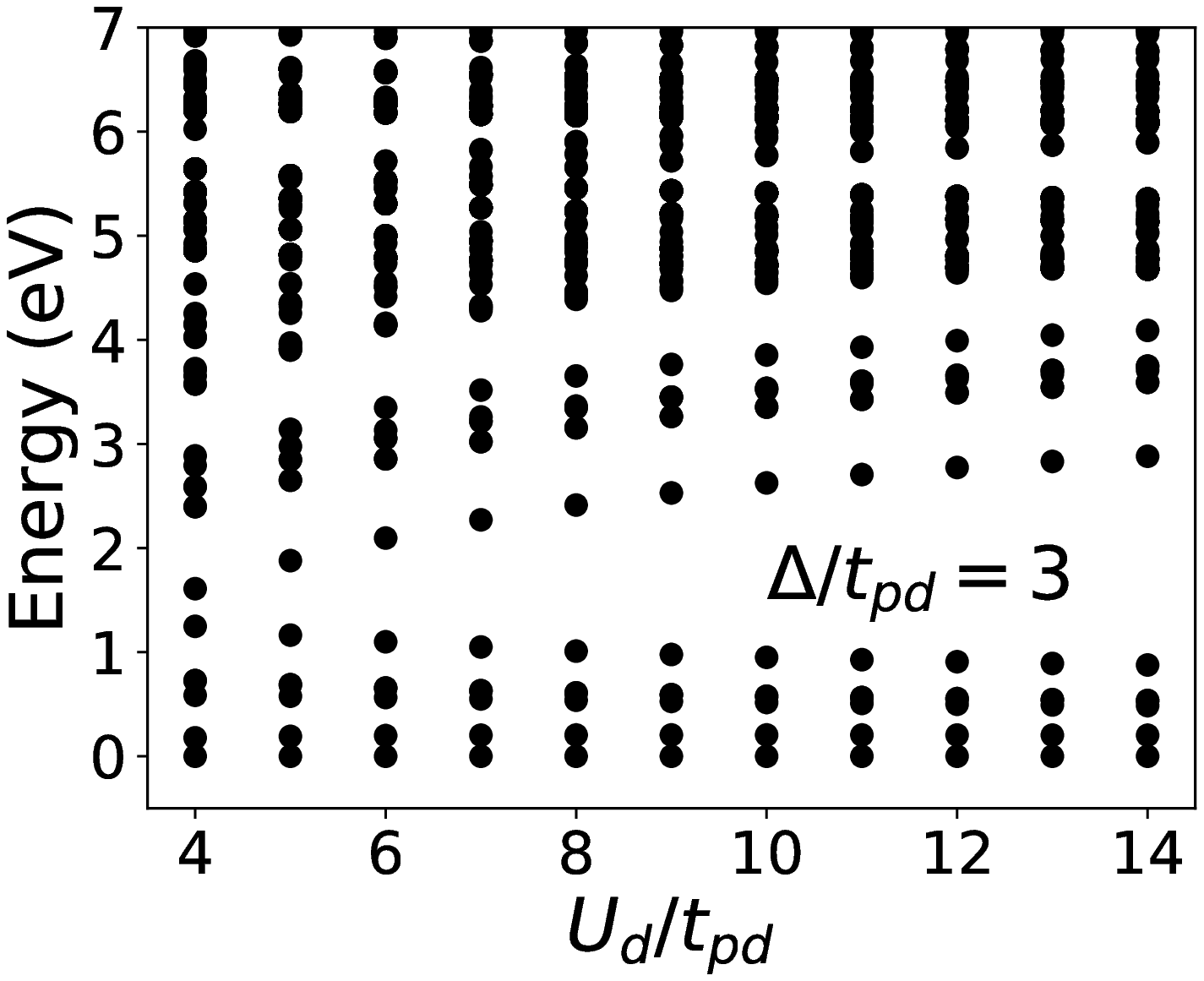}}
\subfigure[]{\includegraphics[width=0.52\linewidth]{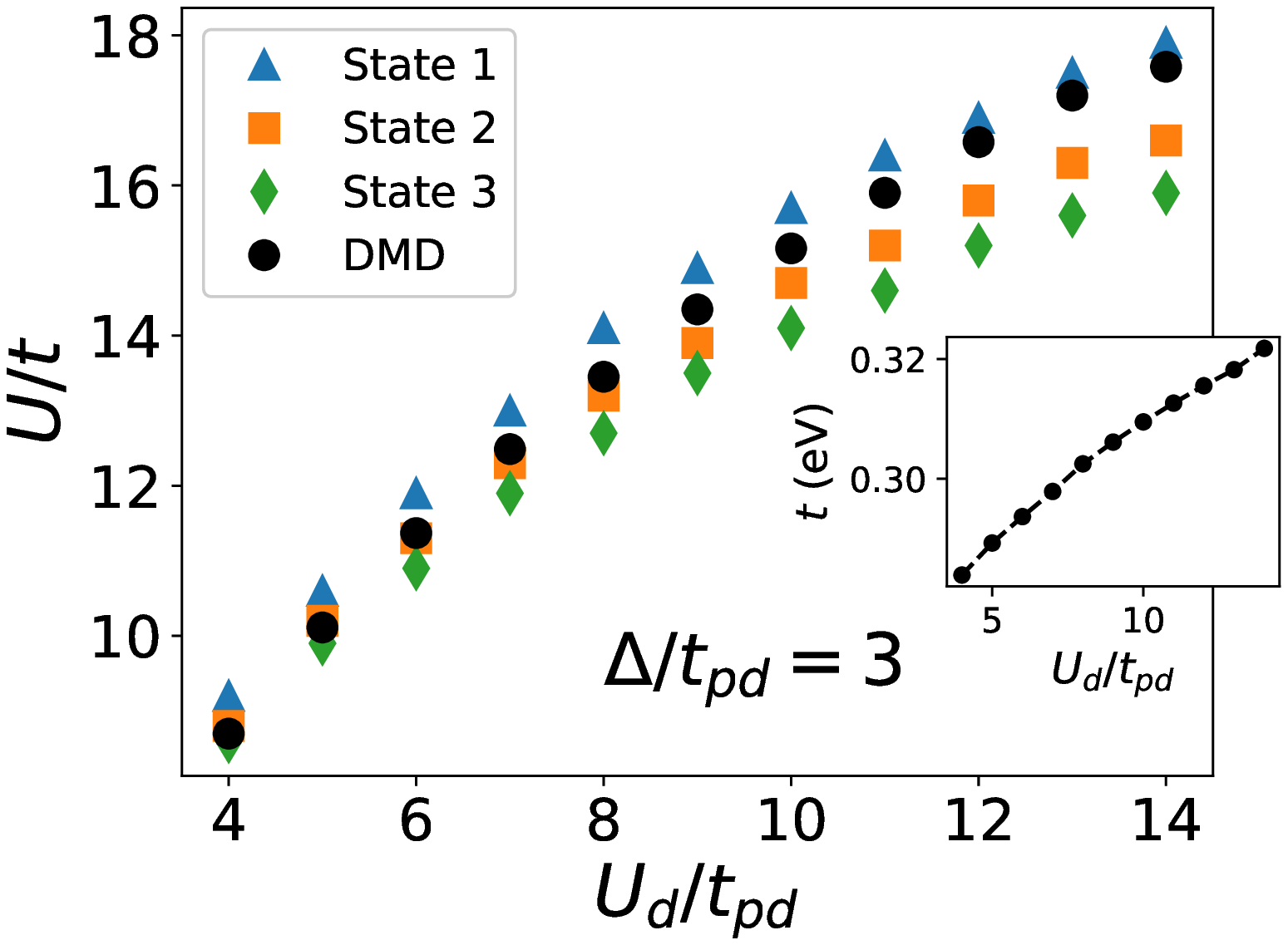}}
\caption{Downfolding of a three-band Hubbard model to an effective one-band Hubbard model of d-like orbitals. Panels (A) and (C) show the low energy spectra of the three-band model 
(relative to the corresponding ground state) for the cases of (A) fixed $U_d/t_{pd}=8$ and 
varying $\Delta/t_{pd}$ and (C) fixed $\Delta/t_{pd}=3$ and varying $U_d/t_{pd}$. 
In all cases, $t_{pd}$ is set to be $1.3$ eV. 
Panels (B) and (D) show the downfolded parameters for the one-band model corresponding to the three-band parameter 
choices in (A) and (C) respectively.  
The one-band $U/t$ values were obtained either by comparing $\langle \tilde{d}_1^{\dagger} \tilde{d}_2\rangle_s$ 
with the corresponding one-band model eigenstates or by the DMD procedure using 
the lowest three eigenstates. The insets show $t$ obtained from DMD. }
\label{fig:varyUdep} 
\end{figure}

Some trends in the one-band description are explored in Figure~\ref{fig:varyUdep} 
by monitoring the downfolded parameters as a function of varying $\Delta/t_{pd}$ and $U_d/t_{pd}$. 
For example, when $U_d/t_{pd}=8$ is fixed and $\Delta/t_{pd}$ is increased, we find that 
the effective hopping $t$ decreases and $U/t$ increases. This is physically reasonable since an increasing difference in the 
single particle energies of the copper and oxygen orbitals makes it energetically unfavorable for holes 
to hop between the two orbitals. When $\Delta/t_{pd}=3$ is fixed and $U_d/t_{pd}$ is increased, $U/t$ increases. 
As one mechanism of avoiding the large $U_d$, the copper orbitals are forced to hybridize more with the oxygen ones; 
on the other hand, hole delocalization is suppressed in a bid to maintain mostly one hole per $\tilde{d}$ due to the larger 
$U/t$. The net result of these effects is that the $t$ also increases.

An important check for the one-band model is its ability to reproduce the low energy gaps of the three-band model; these have been compared in Figure~\ref{fig:energyfit}. 
For the case of $\Delta/t_{pd}=3$, we observe that for all $U_d/t_{pd}$ the lowest three eigenstates were reproduced well. 
This model also reproduces the states outside of the DMD energy window, although with slightly larger errors. 
Similar trends are seen for the case of $\Delta/t_{pd}=5$, 
with the noticeable difference being that the energy error of the highest state has reduced. 
This also reflects that the parameters obtained from DMD are, in general, dependent on the energy window of interest, a 
point which we will highlight shortly by investigating it systematically. 

\renewcommand{\subfigimgone}[3][,]{%
  \setbox1=\hbox{\includegraphics[#1]{#3}}
  \leavevmode\rlap{\usebox1}
  \rlap{\hspace*{110pt}\vspace*{1200pt}\raisebox{\dimexpr\ht1-1.7\baselineskip}{#2}}
  \phantom{\usebox1}
}
\renewcommand{\subfigimgtwo}[3][,]{%
  \setbox1=\hbox{\includegraphics[#1]{#3}}
  \leavevmode\rlap{\usebox1}
  \rlap{\hspace*{110pt}\vspace*{1200pt}\raisebox{\dimexpr\ht1-1.9\baselineskip}{#2}}
  \phantom{\usebox1}
}
\begin{figure}[tbh]
\centering
 \begin{tabular}{@{}p{0.90\linewidth}@{\quad}p{\linewidth}@{}}
\subfigimgone[width=0.49\linewidth]{(A)}{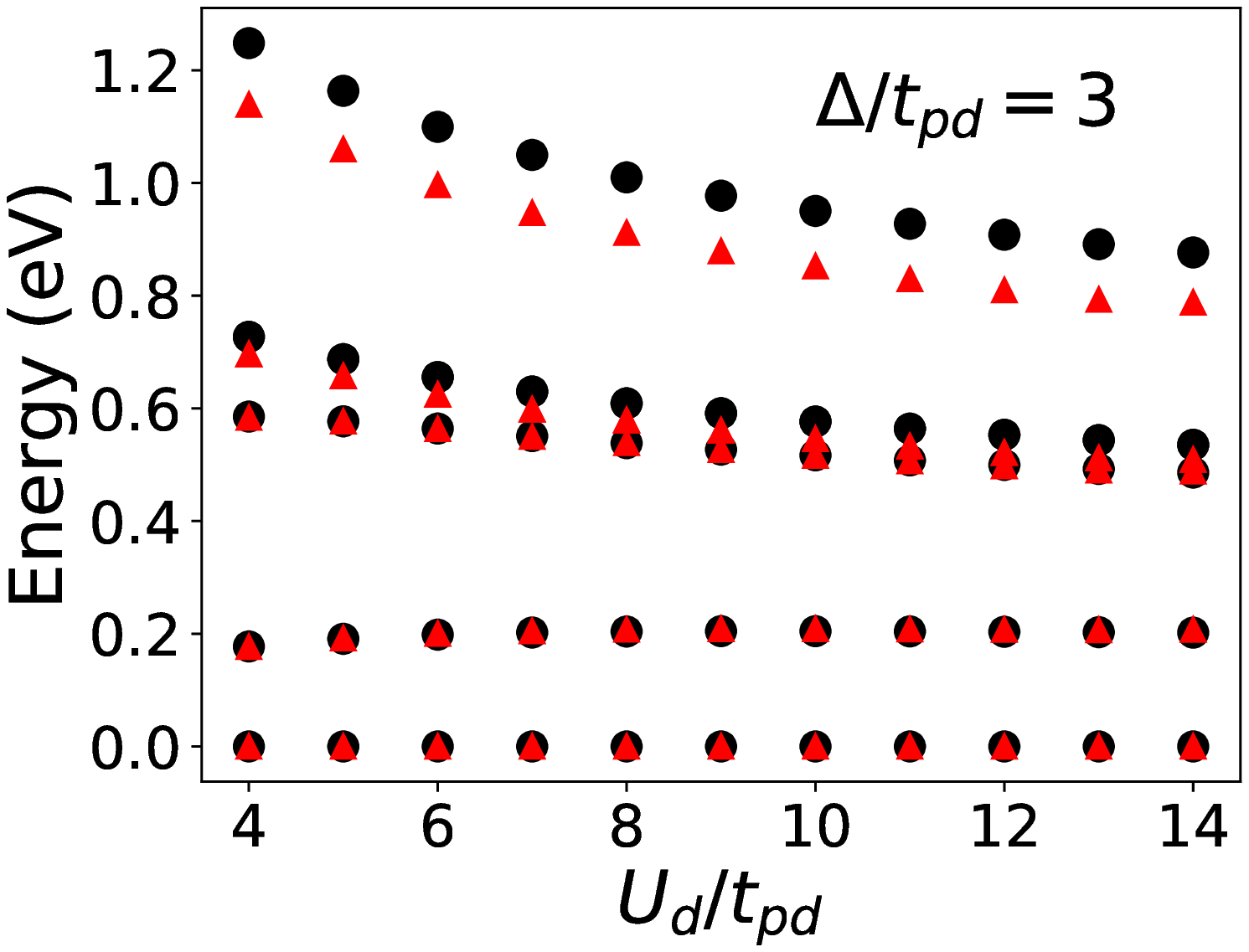}
\subfigimgtwo[width=0.49\linewidth]{(B)}{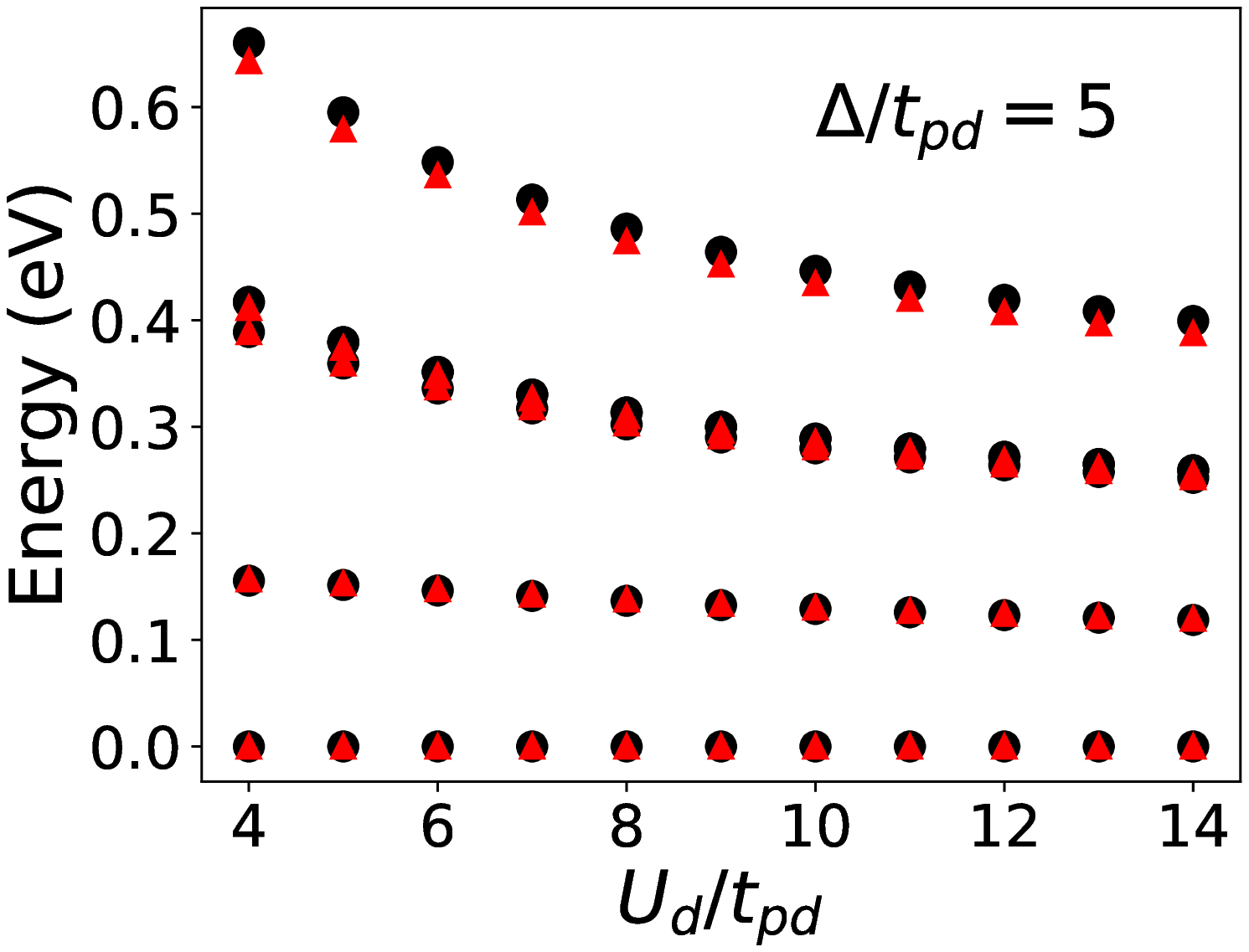}
\end{tabular}
\caption{Comparison of energy spectra of the original three-band model (black circles) and the effective one-band model (red triangles). The three-band model is on the $2\times 2$ cell. 
(A) and (B) show the energy spectra for different parameter sets of the three-band model: (A) $\Delta/t_{pd}=3$ and various $U_{d}/t_{pd}$; (B) $\Delta/t_{pd}=5$ and various $U_{d}/t_{pd}$. 
In all cases, $t_{pd}=1.3$ eV.  }
\label{fig:energyfit} 
\end{figure}	

A promise of downfolding is the reduction of the size of the effective Hilbert space; allowing 
simulations of bigger unit cells to be carried out. To show that this actually works well in practice for the three-band case, 
we consider the $2\sqrt{2} \times 2 \sqrt{2}$ square unit cell, comprising of 8 copper and 16 oxygen orbitals. 
For representative test cases, we performed exact diagonalization calculations at half filling; 
the Hilbert space comprises of 112,911,876 basis states. Roughly 200 Lanczos iterations were carried out, 
enabling convergence of the lowest four energies. We compared the lowest gaps with the 
corresponding calculation on the one-band model on the same square geometry, with a Hilbert space size of only 4,900, 
using the downfolded parameters obtained from the smaller $2 \times 2$ cell. 

Our results are summarized in Figure~\ref{fig:predictivity}. Panel (A) shows the six representative parameter sets 
of the three-band model and the corresponding downfolded one-band parameters. Panels (B) and (C) show the lowest three energy 
gaps for representative values of $U_d/t_{pd}=4,8,12$ for $\Delta/t_{pd}=3$ and $\Delta/t_{pd}=5$ respectively. 
In all cases, the agreement between the three-band and one-band models is remarkably good. The 
energy gap error of the lowest gap is within $0.0004$ eV (1\% relative error). The largest error in the third gap is 
of the order of $0.005$ eV (3\% relative error). 
These results indicate the reliability of the downfolding procedure 
and highlight its predictive power. 

\renewcommand{\subfigimgone}[3][,]{%
  \setbox1=\hbox{\includegraphics[#1]{#3}}
  \leavevmode\rlap{\usebox1}
  \rlap{\hspace*{75pt}\vspace*{100pt}\raisebox{\dimexpr\ht1-9.5\baselineskip}{#2}}
  \phantom{\usebox1}
}
\renewcommand{\subfigimgtwo}[3][,]{%
  \setbox1=\hbox{\includegraphics[#1]{#3}}
  \leavevmode\rlap{\usebox1}
  \rlap{\hspace*{75pt}\vspace*{100pt}\raisebox{\dimexpr\ht1-9.5\baselineskip}{#2}}
  \phantom{\usebox1}
}
\begin{figure}[hbt]
\begin{minipage}{0.42\linewidth}
\centering
 \leavevmode\rlap{\usebox1}
  \rlap{\hspace*{85pt}\vspace*{100pt}\raisebox{\dimexpr\ht1-5.0\baselineskip}{(A)}}
  \phantom{\usebox1}
\hbox{
\quad
\begin{tabular}{c|c||c|c}
\hline
$\Delta/t_{pd}$ & $U_d/t_{pd}$ & $t$ & $U/t$\\
\hline
\hline
3.0 & 4.0 &  0.2839 & 8.698 \\ 
3.0 & 8.0 &  0.3025 & 13.45 \\
3.0 & 12.0 & 0.3155 & 16.58 \\
5.0 & 4.0 &  0.2326 & 15.08 \\
5.0 & 8.0 &  0.2501 & 23.75\\
5.0 & 12.0 & 0.2647 & 29.89 \\
\hline
\end{tabular}}
\end{minipage}
\begin{minipage}{0.57\linewidth}
\subfigimgone[width=0.488\linewidth]{(B)}{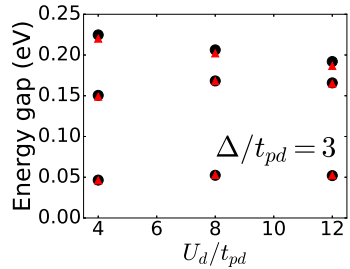}
\subfigimgtwo[width=0.488\linewidth]{(C)}{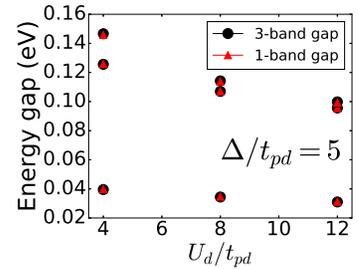}
\end{minipage}
\caption{Multiscale prediction of the effective one-band model. (A) shows the parameters of the effective one-band model obtained from DMD on a small cell, the 4-copper ($2\times2$) cell. 
These parameters were then directly used to predict the energy spectra of a larger cell, the 8-copper ($2\sqrt{2} \times 2\sqrt{2}$) cell. (B) and (C) show the predicted spectra (red triangles) 
in comparison to the exact spectra (black circles) of the three-band model of different parameters (different $\Delta/t_{pd}$ and $U_d/t_{pd}$ where $t_{pd}=1.3$ eV). }
\label{fig:predictivity}
\end{figure} 

Until this point, all our results focused on downfolding using only the lowest three eigenstates of the $2\times2$ cell. 
We now explore the effect of increasing the energy window, by including higher eigenstates, using our 
test example of $U_d/t_{pd}=8$ and $\Delta/t_{pd}=3$. To do so, we now use 
all six low energy eigenstates for optimizing the cost function in Eq.~\eqref{eq:C}. We 
find similar (but not exactly the same) values of $\alpha_i$ compared to 
the case when only the three lowest states were used. The fact that a solution with small cost can be attained 
confirms our expectation that the entire low energy space of six states is consistently 
described by a set of $\tilde{d}_i$ operators. 

\renewcommand{\subfigimgone}[3][,]{%
  \setbox1=\hbox{\includegraphics[#1]{#3}}
  \leavevmode\rlap{\usebox1}
  \rlap{\hspace*{100pt}\vspace*{12pt}\raisebox{\dimexpr\ht1-8\baselineskip}{#2}}
  \phantom{\usebox1}
}
\renewcommand{\subfigimgtwo}[3][,]{%
  \setbox1=\hbox{\includegraphics[#1]{#3}}
  \leavevmode\rlap{\usebox1}
  \rlap{\hspace*{100pt}\vspace*{12pt}\raisebox{\dimexpr\ht1-8\baselineskip}{#2}}
  \phantom{\usebox1}
}
\begin{figure}[hbt]
\centering
 \begin{tabular}{@{}p{0.90\linewidth}@{\quad}p{\linewidth}@{}}
\subfigimgone[width=0.49\linewidth]{(A)}{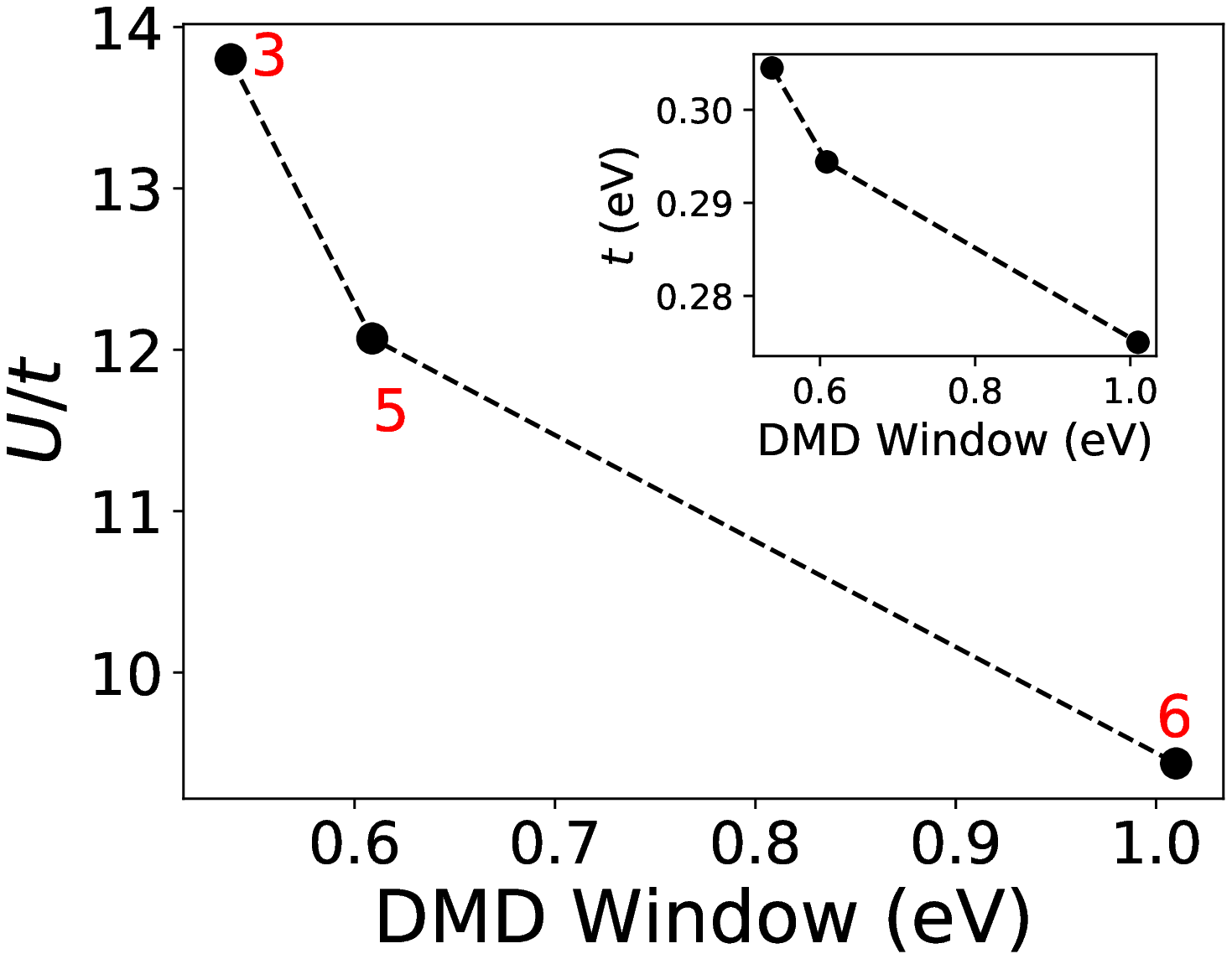}
\subfigimgtwo[width=0.50\linewidth]{(B)}{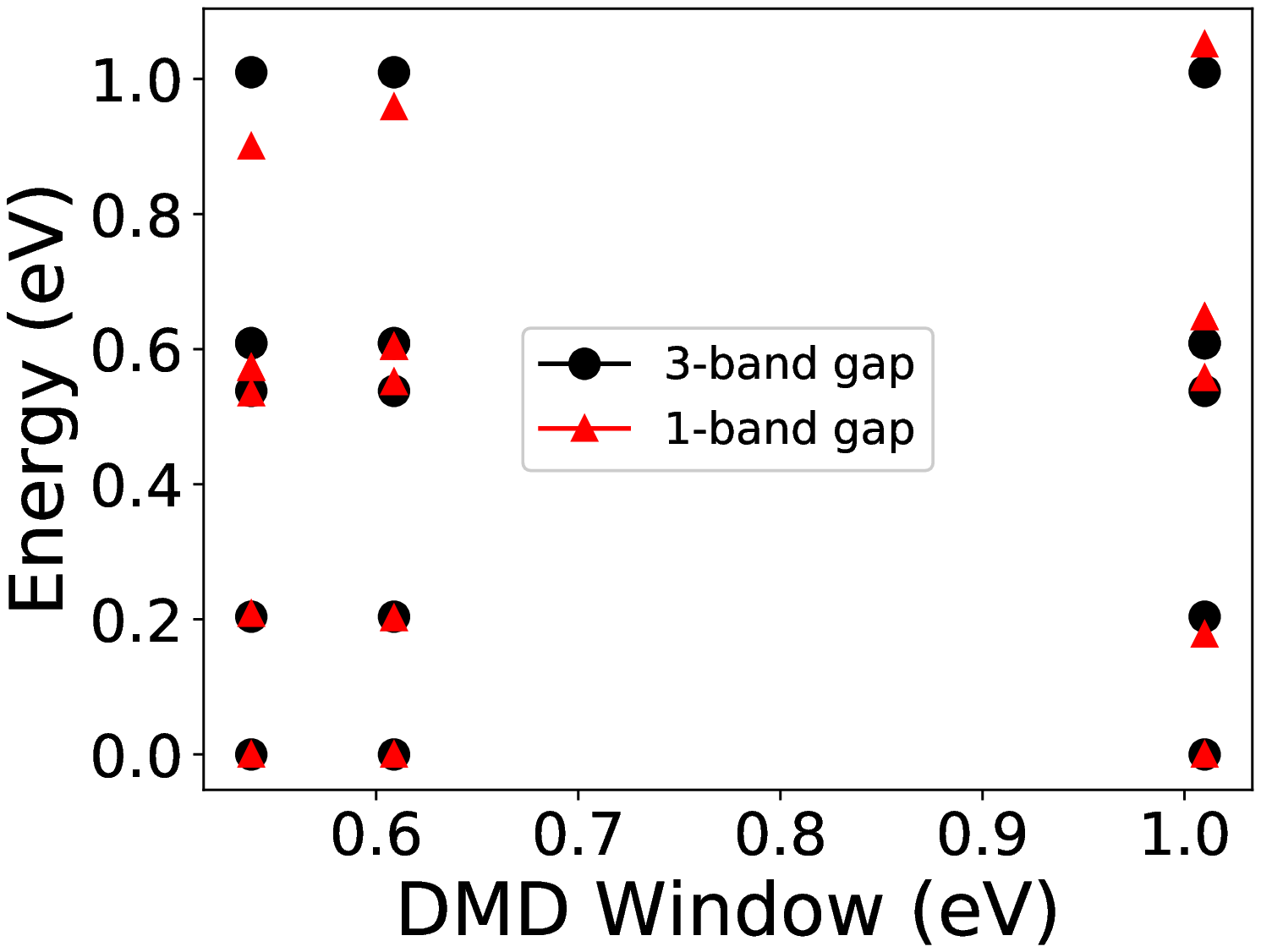}
\end{tabular}
\caption{Dependence of the downfolding parameters on the chosen low energy manifold. (A) shows the variation of downfolded parameters $U/t$ and $t$ (inset) 
with low energy manifolds of different energy windows (corresponding to 3,5 and 6 low energy eigenstates); 
(B) shows the energy spectra of the effective one-band model (red triangles) in comparison to the spectra of the original three-band model (black circles). 
The three-band Hubbard model is on the $2\times 2$ cell with $U_d/t_{pd}=8$, $\Delta/t_{pd}=3$, and $t_{pd}=1.3$.
}
\label{fig:windows} 
\end{figure}	

However, as Figure~\ref{fig:windows}(A) shows, the estimates of $U/t$ and $t$ depend on how many eigenstates 
are used in the DMD procedure. This is because the DMD aims to provide the one-band description 
that best describes \textit{all} states in a given window. If the model is not perfect within a given energy window, 
an energy dependent model is expected, consistent with the renormalization group perspective. For our test example, 
increasing the number of eigenstates from three to six changed $U/t$ from $13.8$ to $9.44$ and $t$ from $0.3045$ to $0.2750$ eV.
\footnote{When three states were used for optimizing the orbitals and for the DMD, we found $U/t\approx 13.45$ and $t=0.3025$ eV. 
This is because $\tilde{d}$ are slightly different in the two cases.} 

The features associated with the energy dependence are further confirmed in Figure~\ref{fig:windows}(B). 
which shows a comparison of energy gaps of the three-band and downfolded one-band model on the $2\times2$ cell. 
When only three states are used, the one-band (nearest neighbor) Hubbard model is insufficient for \textit{accurately}
describing states outside the window. When all six states are used, the DMD tries to minimize the error of the 
largest energy gap at the cost of errors in the smaller energy gaps. 
One could of course choose a different parameterization, say with additional next nearest neighbor $t'$, for which is 
may be possible to reduce this energy dependence significantly and thus have a model that describes the smaller 
and larger energy scales equally well.

\subsection{One dimensional hydrogen chain}
\label{subsection:1dhydrogen}
We now move on to one of the simplest extended \emph{ab initio} systems, a hydrogen chain in one dimension with periodic boundary conditions. The one-dimensional hydrogen chain has been used as a model for validating a variety of modern \textit{ab initio} many-body methods \cite{H10_Simons}. 
We consider the case of $10$ atoms with periodic boundary conditions and work in a regime where the inter-atomic distance $r$ is in the range $1.5 - 3.0$ \AA, such that the system is  well described in terms of primarily $s$-like orbitals. 

For a given $r$, we first obtain single-particle Kohn-Sham orbitals from a set of spin-unrestricted and 
spin-restricted DFT-PBE calculations. The localized orbital basis upon which the RDMs (descriptors) 
are evaluated is obtained by generating intrinsic atomic orbitals (IAO) \cite{knizia_intrinsic_2013} from the Kohn-Sham orbitals 
orthogonalized using the L\"owdin procedure (see Figure~\ref{fig:fit_quality}). These are the orbitals that enter the one-band Hubbard Hamiltonian. 
Then, to generate a database of wavefunctions needed for the DMD, we produce a set of Slater-Jastrow 
wavefunctions consisting of singles and doubles excitations to the Slater determinant:
\begin{subequations}
\begin{eqnarray}
| s \rangle = & e^J \Big[a^\dagger_{i \eta} a_{k \eta}   | KS \rangle \Big] \,,\\
| d \rangle = & \: e^J \Big[a^\dagger_{i \eta} a^\dagger_{j \eta'} a_{k \eta'} a_{l \eta}   | KS \rangle\Big] ,
\end{eqnarray}
\end{subequations}
where $|KS\rangle$ is the Slater determinant of occupied Kohn-Sham orbitals, $\eta \neq \eta'$ are spin indices, 
and $a_{i}^\dagger$ ($a_{i}$) is a single-electron creation (destruction) operator corresponding to a particular Kohn-Sham orbital. The $k,l$ indices label occupied orbitals in the original Slater determinant, while $i,j$ are virtual orbitals. 
$e^J$ is a Jastrow factor optimized by minimizing the variance of the local energy. 

We compute the energies (expectation values of the Hamiltonian) and the RDMs for each wave function within DMC. 
By computing the trace of the resulting 1-RDMs, we verify that all the electrons present in the system are represented within the localized basis of $s$-like orbitals. 
If the trace of the 1-RDM deviates from the nominal number of electrons for a particular state by more than some chosen threshold - 2\% in this example -
it indicates that some orbitals are occupied ($2s$- or $2p$-like orbitals for hydrogen)
that are not represented within the localized IAO basis used for computing the descriptors. 
Hence, these states do not exist within the $\mathcal{LE}$ space, and cannot be described by a one-band $s$-orbital model. We exclude such states from the wave function set. 
The acquired data is then used in DMD to downfold to a one-band Hubbard Hamiltonian.
\renewcommand{\subfigimg}[3][,]{%
  \setbox1=\hbox{\includegraphics[#1]{#3}}
  \leavevmode\rlap{\usebox1}
  \rlap{\hspace*{45pt}\vspace*{12pt}\raisebox{\dimexpr\ht1-5.5\baselineskip}{#2}}
  \phantom{\usebox1}
}
\begin{figure}[hbt]
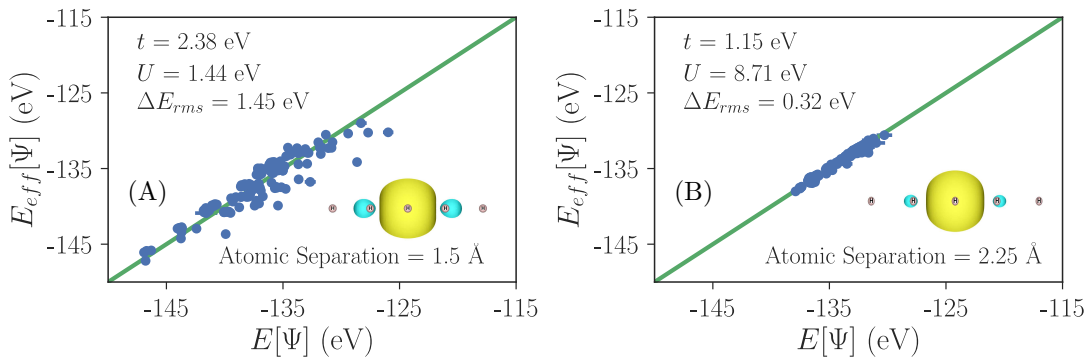

\begin{tabular}{@{}p{0.95\linewidth}@{\quad\quad}p{}@{}}
\centering
   \subfigimg[width=0.45\linewidth]{(A)}{{H_chain_fit_model_length1.5_tUs_inset}.eps}
   \subfigimg[width=0.45\linewidth]{(B)}{{H_chain_fit_model_length2.25_tUs_inset}.eps}\\
 \end{tabular}
\caption{Reconstructed model energy ($E_{eff}[\psi]$) versus DMC energy ($E[\psi]$) for the H$_{10}$ chain at (A) 1.5 \AA \: and (B) 2.25 \AA \:. 
The energy range of excitations narrows significantly for larger interatomic separation. Insets show the intrinsic atomic orbitals which constitute the one-body space 
which was used for calculating the reduced density matrices (descriptors).  
\label{fig:fit_quality}
  }
\end{figure}

Figure~\ref{fig:fit_quality} shows the fitting results of the energy functional $E[\Psi]$ within the sampled $\mathcal{LE}$ for two representative distances (1.5 and 2.25\AA). As we can see, the model $E_{eff}[\Psi]$ reproduces the \textit{ab initio} $E[\Psi]$ up to certain error that decreases with atomic separation. That is, the fitted Hubbard model provides a more accurate description as separation distance increases, and the system becomes more atomic-like. 

Figure~\ref{fig:Parameters-vs-Bond-t} shows the fitted values of the downfolding parameters $t$ and $U/t$ at various distances. 
$t$ decreases as the interatomic distance increases, and the value of $U/t$ increases. The single-band Hubbard model qualitatively captures how the system approaches the atomic limit, in which $t$ becomes zero. 
 
\renewcommand{\subfigimg}[3][,]{%
  \setbox1=\hbox{\includegraphics[#1]{#3}}
  \leavevmode\rlap{\usebox1}
  \rlap{\hspace*{30pt}\vspace*{20pt}\raisebox{\dimexpr\ht1-5.0\baselineskip}{#2}}
  \phantom{\usebox1}
}
\begin{figure}[hbt]
   \centering
 \begin{tabular}{@{}p{1.00\linewidth}@{}p{\linewidth}@{}}
   \centering
    \subfigimg[width=0.31\linewidth]{(A)}{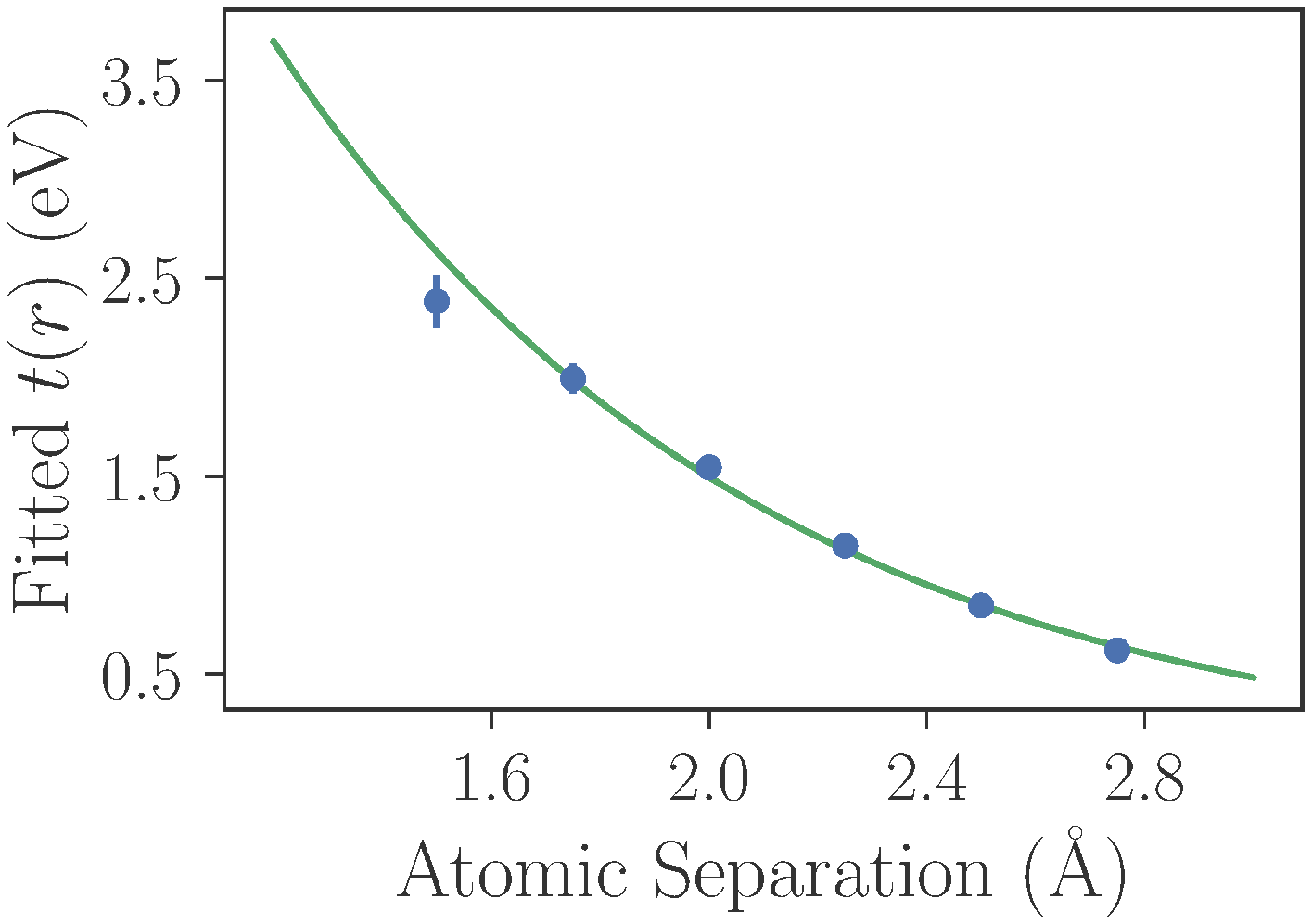}
    \subfigimg[width=0.31\linewidth]{(B)}{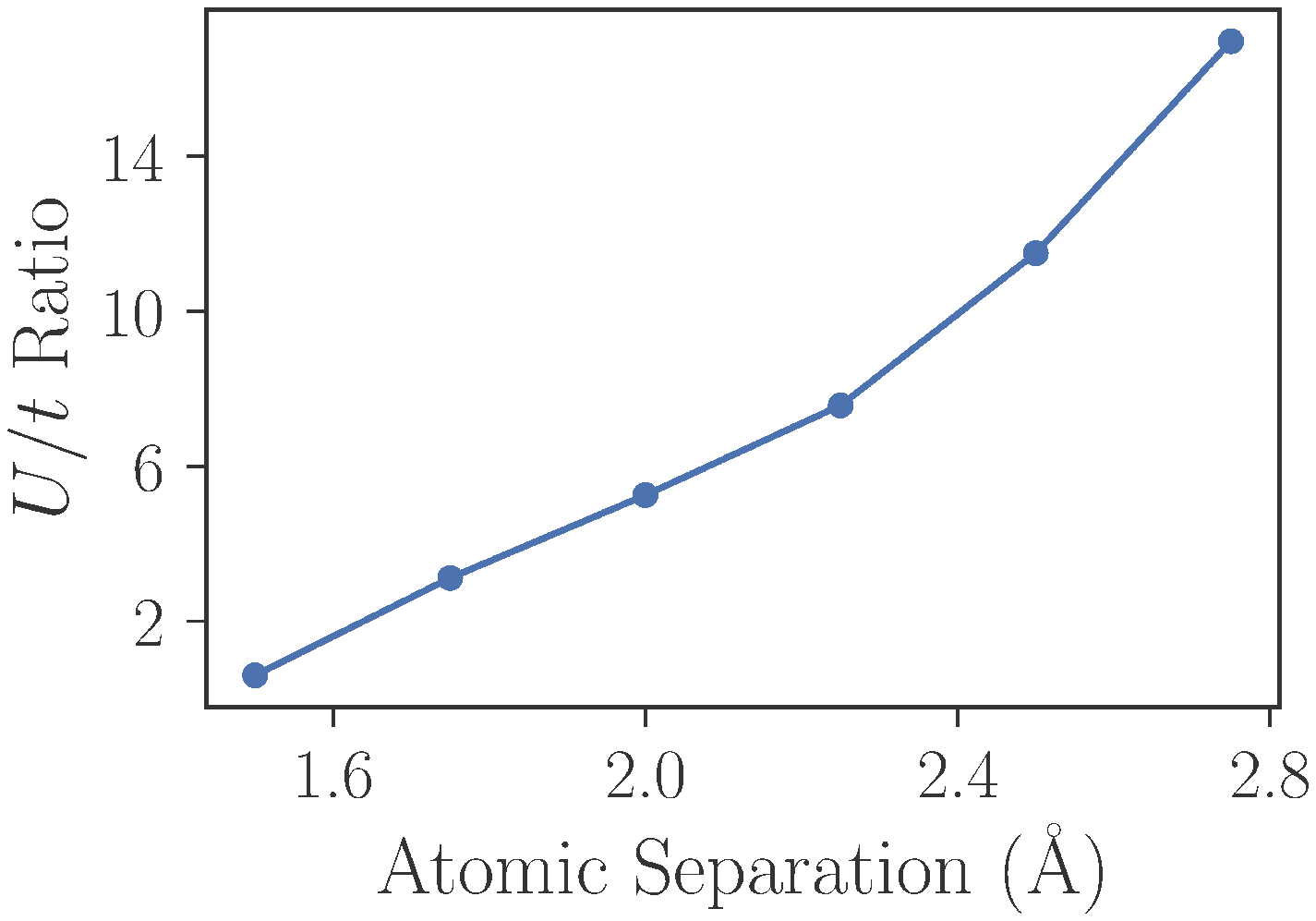}
    \subfigimg[width=0.32\linewidth]{(C)}{{r2_ut_vs_separation_h_chain}.eps}
 \end{tabular}
\caption{ (A) The one-body hopping $t$ parameter as a function of interatomic distance for the periodic H$_{10}$ chain, obtained from a fitted $U$-$t$ model. $t$ declines to zero as $r$ increases. 
(B) The ratio $U/t$ for the fitted parameter values as a function of interatomic separation. The ratio is small at lower bond-lengths, where $t$ is more relevant in describing the system, and larger at longer bond-lengths, where inter-site hopping is less significant. 
(C) The R$^2$ fit parameters obtained from fitting the $U$-$t$ model to the H$_{10}$ chain, as a function of interatomic separation. }\label{fig:Parameters-vs-Bond-t}
\end{figure}

The R$^2$ values obtained from fitting the descriptors to the \textit{ab initio} energy [see Figure~\ref{fig:Parameters-vs-Bond-t}(C)] also show that the single-band Hubbard model is a good description of the system at large distances, but not at small distances. 
This is primarily because the dynamics of other degrees of freedom (e.g. $2s$ and $2p$ orbitals) become important to the low energy spectrum at small distances. Other interaction terms beyond the on-site Hubbard $U$, such as nearest-neighbor Coulomb interactions and Heisenberg coupling, can also become significant. 
Without including higher orbitals or additional many-body interaction terms, the model gives rise to an incorrect insulator state at small distances. 
Conversely, at larger separations ($r>1.8$\AA), where the system is in an insulator phase \cite{Stella2011}, the model provides a better description. 
\subsection{Graphene and hydrogen honeycomb lattice}
\label{subsection:graphene}
Our third example highlights the role of the high energy 
degrees of freedom not present in the low energy description 
but which are instrumental in renormalizing the effective interactions. 
We demonstrate this by considering the case of graphene, and by 
comparing it to artificially constructed counterparts without the high energy electrons. 
Although many electronic properties of graphene can be adequately 
described by a noninteracting tight-binding model of $\pi$ electrons~\cite{Castro2009}, 
electron-electron interactions are crucial for explaining 
a wide range of phenomena observed in experiments~\cite{Kotov2012}. 
In particular, electron screening from $\sigma$ bonding renormalizes 
the low energy plasmon frequency of the $\pi$ electrons~\cite{Zheng2016}. 
In fact a system of $\pi$ electrons with bare Coulomb interactions has been shown to be an insulator instead of a semimetal~\cite{DrutPRL2009, DrutPRB2009,  Smith2014, Zheng2016}. 
Using DMD, we demonstrate how the screening effect of $\sigma$ electrons is manifested in the low energy effective model of graphene. 

In order to disentangle the screening effect of $\sigma$ electrons from the bare interactions 
between $\pi$ electrons, we apply DMD to three different systems, graphene, $\pi$-only graphene, and a honeycomb lattice of hydrogen atoms.  
In the $\pi$-only graphene, the 
$\sigma$ electrons are replaced with a static constant negative charge background. 
The role of $\sigma$ electrons is then clarified by comparing the effective model Hamiltonians of these two systems. 
The hydrogen system we study has the same lattice constant $a=2.46$~\AA~as graphene, 
which has a similar Dirac cone dispersion as graphene~\cite{Zheng2016}. 

By constructing the one-body space by Wannier localizing Kohn-Sham orbitals obtained from DFT calculations (see Figure~\ref{fig:honeycomb_wan}), 
we verify that the low energy degrees of freedom correspond to the $\pi$ orbitals in graphene and 
its $\pi$-only system and $s$ orbitals in hydrogen; these enter the effective one-band Hubbard model description in Eq.~\eqref{eq:oneband}. 
Due to the vanishing density of states at the Fermi level, the Coulomb interaction remains long-ranged, 
in contrast to usual metals where the formation of electron-hole pairs screens the interactions strongly~\cite{Zheng2016}. 
However, for certain aspects, the long ranged part can be considered as renormalizing the 
onsite Coulomb interaction $U$ at low energy~\cite{Schuler2013, Changlani2015}. 

\renewcommand{\subfigimg}[3][,]{%
  \setbox1=\hbox{\includegraphics[#1]{#3}}
  \leavevmode\rlap{\usebox1}
  \rlap{\hspace*{20pt}\vspace*{18pt}\raisebox{\dimexpr\ht1-1.27\baselineskip}{#2}}
  \phantom{\usebox1}
}
\begin{figure}[hbt]
\centering
 \begin{tabular}{@{}p{0.90\linewidth}@{\quad}p{\linewidth}@{}}
   \subfigimg[clip, width=0.45\textwidth]{(A)}{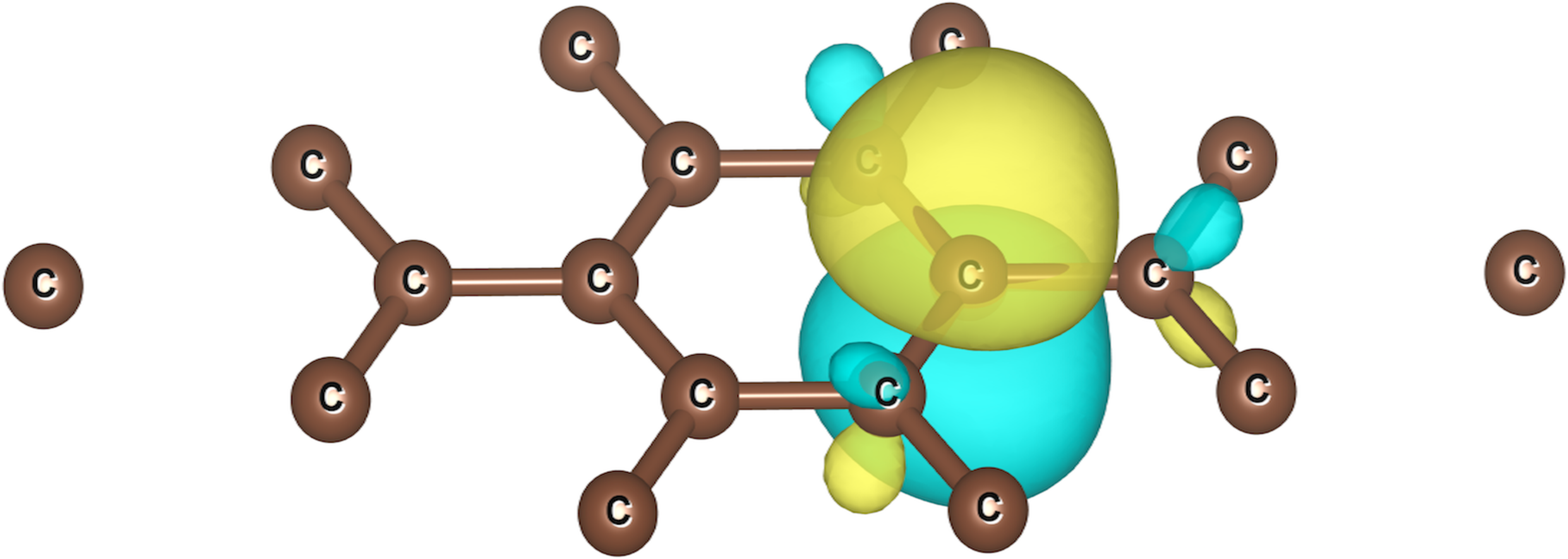}
   \subfigimg[clip, width=0.45\textwidth]{(B)}{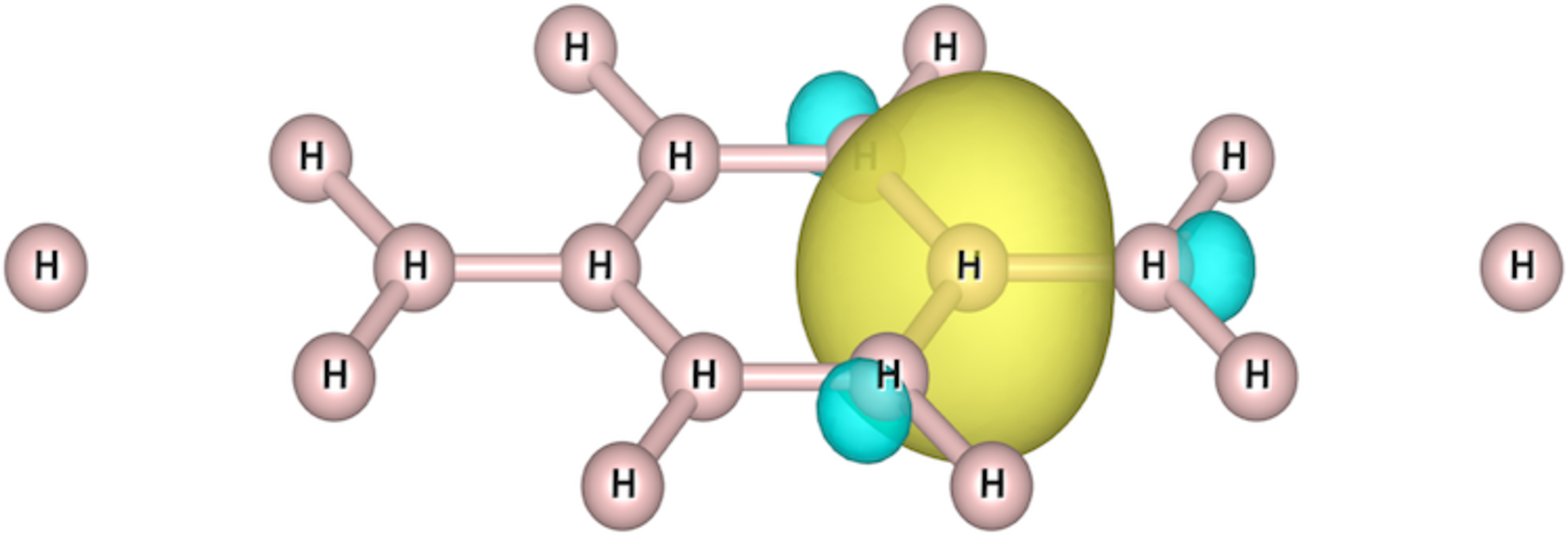}
 \end{tabular}
\caption{Wannier orbitals constructed from Kohn-Sham orbitals: (A) graphene $\pi$ orbital; (B) hydrogen $s$ orbital. }
\label{fig:honeycomb_wan}
\end{figure}

To estimate the one-band Hubbard parameters, we used the DMD method using a set of 50 Slater-Jastrow wave functions that correspond 
to the electron-hole excitations within the $\pi$ channel for the graphene systems 
or $s$ channel for the hydrogen system. In particular, for graphene, 
the Slater-Jastrow wave functions are constructed from occupied $\sigma$ bands and occupied $\pi$ bands, whereas for $\pi$-only 
graphene, Slater-Jastrow wave functions constructed from occupied $\pi$ Kohn-Sham orbitals of graphene. The \textit{ab initio} simulations 
were performed on a $3\times3$ cell (32 carbons or hydrogens) and the energy and RDMs of these wave functions were
evaluated with VMC. The error bars on our downfolded parameters are estimated using the jackknife method \cite{Jackknife1981}.
The results from our calculations are summarized in Figure~\ref{fig:ne_aidmd_gh}.

\renewcommand{\subfigimg}[3][,]{%
  \setbox1=\hbox{\includegraphics[#1]{#3}}
  \leavevmode\rlap{\usebox1}
  \rlap{\hspace*{42pt}\vspace*{12pt}\raisebox{\dimexpr\ht1-1.2\baselineskip}{#2}}
  \phantom{\usebox1}
}
\begin{figure}[hbt]
\centering
  \begin{tabular}{@{}p{0.95\linewidth}@{\quad}p{\linewidth}@{}}
    \subfigimg[clip, width=0.325\linewidth]{(A)}{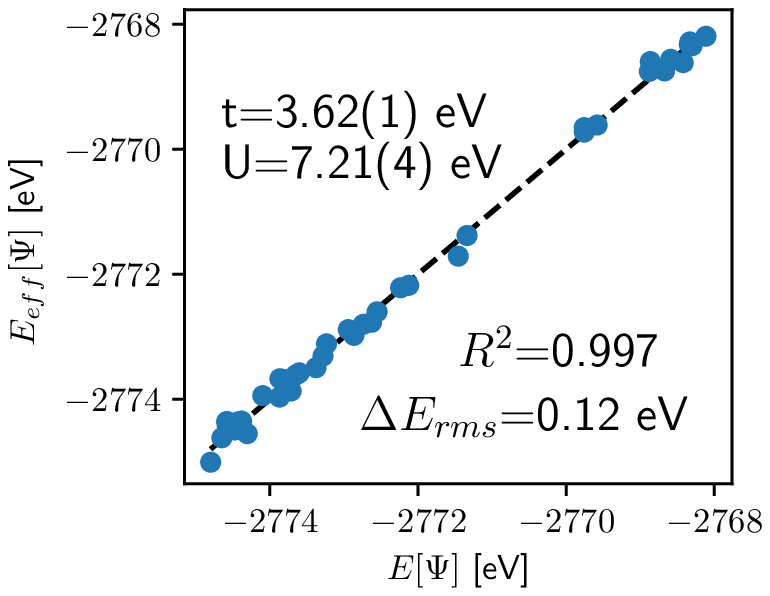}
    \subfigimg[clip, width=0.325\linewidth]{(B)}{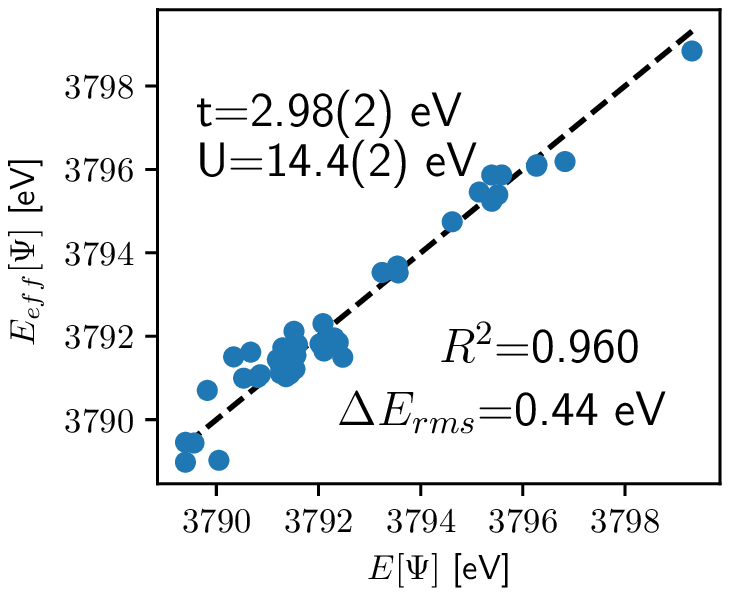}
    \subfigimg[clip, width=0.325\linewidth]{(C)}{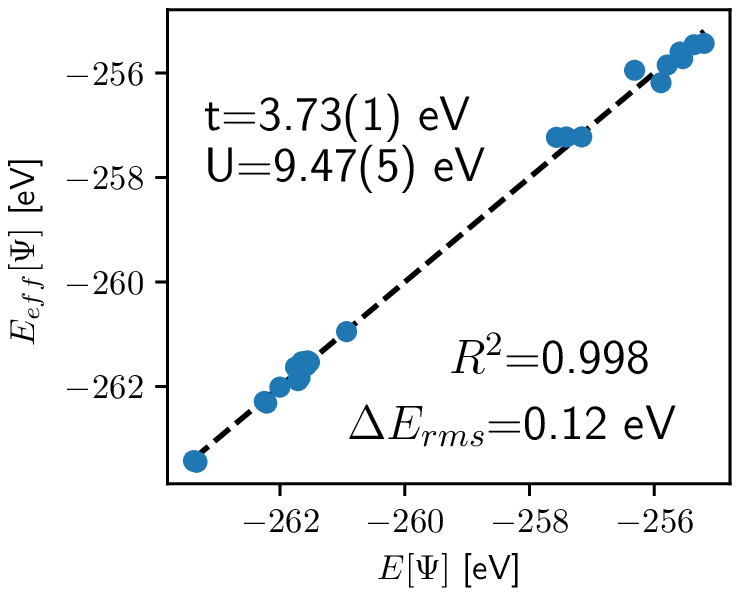}
    \end{tabular}
\caption{Comparison of \textit{ab initio} ($E[\psi]$) and fitted energies ($E_{eff}[\psi]$) 
of the 3$\times$3 periodic unit cell of graphene and hydrogen lattice: (A) graphene; (B) $\pi$-only graphene; (C) hydrogen lattice.}\label{fig:ne_aidmd_gh}
\end{figure}

We find that the one-band Hubbard model describes graphene and hydrogen very well, as is seen from the fact that $R^2$ is closed to 1 for the fits. Our fits are shown in Figure~\ref{fig:ne_aidmd_gh}.
For both graphene and hydrogen, $U/t$ is smaller than the critical value of the 
semimetal-insulator transition $(U/t)_c \approx 3.8$ for the honeycomb lattice~\cite{Sorella2012}, 
which is consistent with both systems being semimetals. The two systems indeed have similar hopping constant $t$, 
consistent with the fact that they have similar Fermi velocities at the Dirac point. However, 
the difference in their high energy structure manifests itself as differently renormalized electron-electrons interactions, 
explaining the difference in $U$. Most prominently, the $\pi$-only system has much larger $U/t$ ($\sim4.9$) compared to graphene, 
which is large enough to push it into the insulating (antiferromagnetic) phase.
Thus, downfolding shows the clear significance of $\sigma$ electrons in renormalizing the effective onsite interactions of the $\pi$ orbitals,making graphene a weakly interacting semimetal instead of an insulator.  

\subsection{FeSe diatomic molecule}
\label{subsection:fese}
Transition metal systems are often difficult to model due to the many orbital and possibly magnetic descriptors introduced by $d$ electrons. 
This is seen in the proliferation of models for transition metals, which include terms like spin-spin coupling, spin-orbital coupling, hopping, Hund's like coupling, and so on. 
Models containing all possible descriptors are unwieldy, and it is difficult to determine which degrees of freedom are needed for a minimal model to reproduce an interesting effect. 
Transition metal systems are challenging to describe using most electronic structure methods because of the strong electron correlations and multiple oxidation states possible in these systems. 
Fixed-node DMC has been shown to be a highly accurate method on transition metal materials in improving the description of the ground state properties and energy gaps~\cite{Foyevtsova2014, Wagner_Abbamonte, Zheng2015, Wagner2016}. In this section, we apply DMD using fixed-node DMC to quantify the importance of various interactions in a FeSe diatomic molecule with a bond length equal to that of the iron based superconductor, FeSe~\cite{kumar_crystal_2010}, in order to help identifying the descriptors that may be relevant in the bulk material.
\begin{figure*}[htb]
  \centering
  \includegraphics[width=0.8\textwidth]{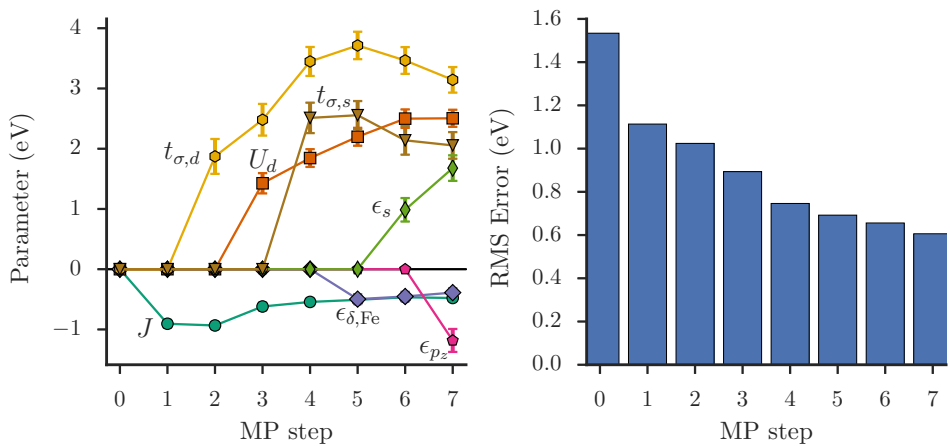}
  \caption{
    \label{fig:fese} 
    (Left) Parameter values for each fit generated in the MP algorithm, labeled at the step where they are included in the model. 
    A zero value indicates that parameter is not yet added to the model.
    The sign of $J$ is consistent with Hund's rules, and the signs of $t_{\sigma,d}$ and $t_{\sigma,s}$ are consistent with Se being located in positive $z$ with respect to Fe. 
    (Right) RMS error of each model generated by MP as the algorithm includes parameters. 
    The RMS error of the largest model considered was 0.61~eV.
  }
\end{figure*}

We considered a low-energy space spanned by the Se $4p$, Fe $3d$, and Fe $4s$ orbitals. We sampled singles and doubles excitations from a reference Slater determinant 
of Kohn-Sham orbitals taken from DFT calculations with PBE0 functional with total spin 0, 2, and 4, which were then multiplied by a Jastrow factor and further optimized using fixed-node DMC. 
After this procedure, 241 states were within a low energy window of 8 eV. 
Of these, eight states had a significant iron $4p$ component, which excludes them from the low-energy subspace. 
This leaves us with 233 states in the low-energy subspace.

We consider a set of 21 possible descriptors consisting of local operators on the iron $4s$, iron $3d$ states, and selenium $4p$ states, which is a total of 9 single-particle orbitals.
We use the same IAO construction as Section~\ref{subsection:1dhydrogen} to generate the basis for these operators.
At the one-body level, we consider orbital energy descriptors: 

\begin{align}
  &\epsilon_{s} n_s,&
  &\epsilon_{\pi,\mathrm{Se}} (n_{p_x} + n_{p_y}), &
  &\epsilon_{z} n_{p_z},&
  \nonumber \\
  &\epsilon_{z^2} n_{d_{z^2}},& 
  &\epsilon_{\pi,\mathrm{Fe}} (n_{d_{xz}} + n_{d_{yz}}).& 
  &\epsilon_{\delta} (n_{d_{xy}} + n_{d_{x^2-y^2}}),&
\end{align}

and the symmetry-allowed hopping terms:

\begin{align}
  &t_{\sigma,d} \sum_{\eta} \left( c_{d_{z^2},\eta}^{\dagger} c_{p_z,\eta} + \text{h.c.} \right),&
  &t_{\sigma,s} \sum_{\eta} \left(c_{s,\eta}^{\dagger}  c_{p_z,\eta} + \text{h.c.} \right),&
  &t_{\pi} \sum_{\eta} \left(c_{d_{xz},\eta}^{\dagger}  c_{p_x,\eta} + c_{d_{yz},\eta}^{\dagger}  c_{p_y,\eta} + \text{h.c.} \right).&
\end{align}

As before, $\eta$ represents the spin index. At the two-body level, we consider Hubbard interactions:

\begin{align}
  &U_p \sum_{i \in p} n_{i,\uparrow} n_{i,\downarrow},&
  &U_{d,\delta} \sum_{i\in \{d_{xy},d_{x^2-y^2}\}} n_{i,\uparrow} n_{i,\downarrow},&
  \nonumber \\
  &U_d \sum_{i \in d} n_{i,\uparrow} n_{i,\downarrow},&
  &U_{d,\pi} \sum_{i\in \{d_{xz},d_{yz}\}} n_{i,\uparrow} n_{i,\downarrow},&
  &U_{d_{z^2}} n_{d_{z^2},\uparrow} n_{d_{z^2},\downarrow},&
\end{align}

where $p$ refers to the Se-$4p$ orbitals and $d$ refers to the Fe-$3d$ orbitals. 
Importantly, we also account for the Hund's coupling terms for the iron atom:

\begin{align}
  &J \sum_{\substack{i\ne j \\i,j \in d}} S_i \cdot S_j,&
  &J_{\delta} S_{d_{xy}} \cdot S_{d_{x^2-y^2}},&
  &J_{\delta,d_{z^2}} (S_{d_{xy}} + S_{d_{x^2-y^2}}) \cdot S_{d_{z^2}},& \label{eqn:hund1}
  \nonumber \\
  &J_{\pi} S_{d_{xz}} \cdot S_{d_{yz}},&
  &J_{\pi,d_{z^2}} (S_{d_{xz}} + S_{d_{yz}}) \cdot S_{d_{z^2}}.&
  &J_{\pi,\delta} (S_{d_{xz}} + S_{d_{yz}}) \cdot (S_{d_{xy}} + S_{d_{x^2-y^2}}),&
\end{align}

Finally, we also add a nearest neighbor Hubbard interaction: $V \sum_{i\in p, j\in d} n_{i} n_j$.

To generate a minimal description of the system, we employed a matching pursuit (MP) method~\cite{MP_Zhang1993}.
MP chooses to add descriptors based on their correlation with the residual of the linear fit. 
We started with a model that only consists of $E_0$. The Hund's coupling descriptor [first term in Eq.~\eqref{eqn:hund1}]
has the largest correlation coefficient with the residual fit, so it is added first. The fact that the Hund's coupling is chosen first in MP 
is consistent with the several studies in the literature, which find a prominent Hund's coupling can explain some 
of the properties of bulk FeSe.~\cite{demedici_hunds_2011,de_medici_janus-faced_2011,georges_strong_2013,busemeyer_competing_2016}. 
Next, MP includes the descriptor that correlates most strongly with the residuals of this first minimal model, in this case the hopping between $d$ and $p$ $\sigma$-symmetry orbitals. 
We repeated this procedure until the RMS error did not improve more than 0.05 eV upon adding a new parameter.
This criterion was chosen to strike a balance between the complexity of the model and the accuracy in reproducing the sample set.

The following model was produced:
\begin{eqnarray}
  H_{eff} &=& \epsilon_{\delta,\mathrm{Fe}} (n_{d_{xy}} + n_{d_{x^2-y^2}}) + \epsilon_s n_{s}+\epsilon_{z} n_{p_z} \nonumber \\
          &&+ t_{\sigma,d} \sum_{\eta} \left( c_{d_{z^2},\eta}^{\dagger} c_{p_z,\eta} + \text{h.c.} \right)+t_{\sigma,s} \sum_{\eta} \left(c_{s,\eta}^{\dagger}  c_{p_z,\eta} + \text{h.c.} \right)
              \nonumber \\
          &&+ U_d \sum_{i \in d} n_{i,\uparrow} n_{i,\downarrow} + J \sum_{\substack{i\ne j \\i,j \in d}} S_i \cdot S_j + E_0. \label{eq:fesemodel}
\end{eqnarray}
As before, $\eta$ is the spin index and $i$ is the orbital index, and $d$ is the set of iron $3d$ orbitals, as above.
$E_0$ is an overall energy shift, also included as a fit parameter.
The parameter values and corresponding error of each model produced by MP are shown in Figure~\ref{fig:fese}.
Note that all parameters may change at each step because the entire model is refitted when an addition parameter is included in each iteration.
The parameters are smoothly varying with the inclusion of new parameters, and they take the correct signs based on symmetry (where applicable). 
The RMS error decreases with each additional parameter, but less so as the algorithm appends additional parameters. 
Eventually the diminishing improvements do not merit the additional complexity of more parameters.

\section{Conclusion and Future prospects}
\label{sec:conclusion}
The density matrix downfolding (DMD) technique uses data derived from low-energy approximate solutions to a high energy Hamiltonian to systematically determine an effective Hamiltonian that describes the low-energy behavior of the system.
It is based on several rather simple proofs which occupy a role similar to the variational principle; they allow us to know which effective models are closer to the correct solution than others. 
The method is very general and does not require a quasiparticle picture to apply, and neither does it have double-counting issues.
It treats all interactions on an equal footing, so hopping parameters are naturally modified by interaction parameters and so on.
While most of the applications have used the first principles quantum Monte Carlo method to obtain the low-energy solutions, the method is completely general and can be used with any solution method that can produce high quality energy and reduced density matrices.  
We have discussed several examples to present the conceptual and algorithmic aspects of DMD. 

The resultant lattice model can be efficiently and accurately solved for large system sizes~\cite{LeBlanc_PRX} using techniques designed and suited for small local Hilbert spaces; these include exact or selected diagonalization~\cite{DeRaedt,Tubman_selci,Holmes_Tubman_Umrigar}, density matrix renormalization group (DMRG)~\cite{White1992}, tensor networks~\cite{PEPS,Changlani_CPS,NeuscammanCPS}, dynamical mean field theory (DMFT)~\cite{Kotliar2006}, density matrix embedding (DMET)~\cite{DMET_2012} and lattice QMC methods~\cite{Scalapino, Trivedi_Ceperley, Zhang_AFQMC, Sandvik_loops, Prokofiev, 
Booth2009,SQMC,Holmes_Changlani_Umrigar, Booth2013}. 
These methods have also been used to obtain excited states, dynamical correlation functions and thermal properties, that have been difficult to obtain in \textit{ab initio} approaches.

DMD, though conceptually simple, is still a method in its development stages, with room for algorithmic improvements and new applications. 
Advances in the field of inverse problems~\cite{Berg2017} could be incorporated into DMD to 
mitigate the problems associated with optimization and over-fitting. 
Here we briefly outline some aspects that need further research:
\begin{enumerate}
	\item The wave function database ($\ket{\Psi} \in \mathcal{LE}$):
	The DMD method relies crucially on the availability of a low energy space of \textit{ab initio} wave functions. While these wave functions do not have to be eigenstates, automating their construction remains challenging and realistically requires knowledge of the physics to be described.
	\item Optimal choice of basis functions. The second-quantized operators in the effective Hamiltonian correspond to a basis in the continuum. The quality of the model depends on the basis describing the changes between low-energy wave functions accurately.
	\item Form of the low energy model Hamiltonian. While the exact effective Hamiltonian is unique, there may be many ways of approximating it with varying levels of compactness and accuracy.
\end{enumerate} 
The advantage of the DMD framework is that all these can be resolved internally.
Given a good sampling of $\mathcal{LE}$, (2) and (3) can be resolved using regression. 
Given that (2) and (3) are correct or near correct, then (1) can be resolved by finding binding planes, as noted in Section~\ref{sec:theory}.
The method thus has a degree of self consistency; it will return low errors only when 1-3 are correct.

We have shown applications to strongly correlated models (3-band), {\it ab initio} bulk systems hydrogen chain and graphene, and a transition metal molecule FeSe.
The technique is on the verge of being applied to transition metal bulk systems; there are no major barriers to this other than a polynomially scaling computational cost and the substantial amount of work involved in parameterizing and fitting models to these systems.
Looking into the future, we anticipate that this technique can help with the definition of a correlated materials genome--what effective Hamiltonian best describes a given material is highly relevant to its behavior.

\section*{Acknowledgments} 
We thank  David Ceperley,  Richard Martin, Cyrus Umrigar,  Garnet Chan,  Shiwei Zhang, Steven White,  
Lubos Mitas, So Hirata, Bryan Clark, Norm Tubman, Miles Stoudenmire and Victor Chua for extremely useful and insightful discussions. 
This work was funded by the grant DOE FG02-12ER46875 (SciDAC). 
HZ acknowledges support from Argonne Leadership Computing Facility, a U.S. Department of Energy, Office of Science User Facility under Contract DE-AC02-06CH11357.
HJC acknowledges support from the U.S. Department of Energy, 
Office of Basic Energy Sciences, Division of Materials Sciences and Engineering under Award DE-FG02-08ER46544 for his work at the Institute for Quantum Matter (IQM). 
KTW acknowledges support from the National Science Foundation under the Graduate Research Fellowship Program, Fellowship No. DGE-1144245.
This research is part of the Blue Waters sustained-petascale computing project, which is supported by the National Science Foundation (awards OCI-0725070 and ACI-1238993) and the state of Illinois. Blue Waters is a joint effort of the University of Illinois at Urbana-Champaign and its National Center for Supercomputing Applications.

\section*{Author Contributions}
HJC, HZ and LKW conceived the initial DMD ideas and designed the project and organization of the paper. 
All authors contributed to the theoretical developments and various representative \textit{ab initio} and lattice examples. 
All authors contributed to the analysis of the data, discussions and writing of the manuscript. 
LKW oversaw the project. HJC and HZ contributed equally to this work.
 
\section*{Appendix}
In Section~\ref{subsection:3band} we discussed parameterizing the transformation ${\bf T}$ 
as a $4\times12$ matrix for the $2 \times 2$ unit cell, in terms of $\alpha_1$, $\alpha_2$, $\alpha_3$ 
and $\alpha_4$. Using the numbering of the orbitals corresponding to Figure~\ref{fig:threeband}, 
the explicit form of ${\bf T}$ is,
\begin{eqnarray}
{\bf T} = 
\left(
\begin{array}{cccccccccccc}
F        & \alpha_2 &        \alpha_2 &  \alpha_4 & \alpha_1 & \alpha_1 & -\alpha_1 & -\alpha_1 & \alpha_3 & -\alpha_3 & \alpha_3 & -\alpha_3 \\
\alpha_2 &  F       &        \alpha_4 &  \alpha_2 & \alpha_3 & -\alpha_1 & \alpha_1 & -\alpha_3 & -\alpha_3 & \alpha_3 & \alpha_1 & -\alpha_1 \\
\alpha_2 & \alpha_4 & F               &  \alpha_2 & -\alpha_1 & \alpha_3 & -\alpha_3 & \alpha_1 & \alpha_1 & -\alpha_1 & -\alpha_3 & \alpha_3 \\
\alpha_4 & \alpha_2 & \alpha_2        &   F       & -\alpha_3 & -\alpha_3 & \alpha_3 & \alpha_3 & -\alpha_1 & \alpha_1 & -\alpha_1 & \alpha_1 \\
\end{array}
\right)\,,
\end{eqnarray}
where we have defined $F \equiv \sqrt{1-4{\alpha_1}^2 - 2{\alpha_2}^2 - 4 {\alpha_3}^2 -{\alpha_4}^2}$.

A concrete and representative example of our results, shown in Section~\ref{subsection:3band} 
for the $2\times2$ cell, is explained for the case of $U_d/t_{pd}=8$ and $\Delta/t_{pd}=3$. 
The first task was to obtain the optimal transformation ${\bf T}$ 
for which the lowest three eigenstates ($s=1,2,3$) of the three-band model were used for 
computing the cost in Eq.~\eqref{eq:C}. The minimum of the cost was 
determined by a brute force scan in the four dimensional space of $\alpha$'s and 
using a linear grid spacing of $0.002$ found $\alpha_1=0.216$, $\alpha_2=0.042$, $\alpha_3=0.018$ and $\alpha_4=0.016$. 
The two terms in the cost i.e. the trace and orthogonality conditions are individually satisfied to a relative error of 
less than 0.5 percent. 

$\langle c_i^{\dagger} c_j \rangle_s$ and $\langle {c_{j,\uparrow}}^{\dagger} {c_{m,\downarrow}}^{\dagger} c_{n,\downarrow} c_{k,\uparrow} \rangle_{s}$ were computed from the exact knowledge of the three-band model eigenstates and hence $\langle {\tilde{d}_{i,\eta}}^{\dagger} \tilde{d}_{j,\eta} \rangle_{s}$ 
and $\langle \tilde{n}_{i,\uparrow} \tilde{n}_{i,\downarrow} \rangle_{s}$ are obtained 
once the optimal ${\bf T}$ has been determined. 
As mentioned in the main text, the value of $\langle \tilde{d}_{1,\eta}^{\dagger} \tilde{d}_{2,\eta} \rangle_s$ provides estimates for $U/t$ of the effective model by direct comparison of its value to that in the 
corresponding eigenstate in the one-band model. For our test example, the absolute values of $\langle \tilde{d}_{1,\uparrow}^{\dagger} \tilde{d}_{2,\uparrow} \rangle_s$ 
in states $s=1,2,3$ are approximately $0.159$, $0.142$ and $0.084$ respectively which correspond to $(U/t)_1 \approx 14.1 $, $(U/t)_2 \approx 13.2 $, $(U/t)_3 \approx 12.7 $. 
Performing DMD with the three eigenenergies and their calculated RDMs gave $t=0.3025$ eV and $U/t = 13.45 $; the latter in the correct range of the other estimates.

\bibliographystyle{unsrt}
\bibliography{refs}

\end{document}